\newcommand{\NP}{{\sf NP}}
\newcommand{\sP}{{\sf P}}
\newcommand{\problemdef}[3]{
	\begin{center}
		\begin{boxedminipage}{.99\textwidth}
			\textsc{{#1}}\\[2pt]
			\begin{tabular}{ r p{0.8\textwidth}}
				\textit{~~~~Instance:} & {#2}\\
				\textit{Question:} & {#3}
			\end{tabular}
		\end{boxedminipage}
	\end{center}
}
\newtheorem{open}[example]{Open Problem}
\newtheorem{observation}{Observation}
\title{Steiner Trees for Hereditary Graph Classes:\\ a Treewidth Perspective\thanks{An extended abstract of this paper appeared in the proceedings of LATIN 2020~\cite{BBJPP20}. The paper received support from the Leverhulme Trust (RPG-2016-258) and the 
Royal Society (IES$\setminus$R1$\setminus$191223).}} 
\author{
Hans Bodlaender\inst{1}
\orcidID{0000-0002-9297-3330}
\and
Nick Brettell\inst{2}
\orcidID{0000-0002-1136-418X}
\and
Matthew Johnson\inst{2}
\orcidID{0000-0002-7295-2663}
\and
Giacomo Paesani\inst{2}
\orcidID{0000-0002-2383-1339}
\and
Dani\"el Paulusma\inst{2}
\orcidID{0000-0001-5945-9287}
\and 
Erik Jan van Leeuwen\inst{1}}
\institute{Department of Information and Computing Sciences,
Utrecht University, The Netherlands \email{\{h.l.bodlaender,e.j.vanleeuwen\}@uu.nl} \and
Department of Computer Science, Durham University, UK \\ \email{\{nicholas.j.brettell,matthew.johnson2,giacomo.paesani,daniel.paulusma\}@durham.ac.uk}}
\begin{document}
\maketitle 

\begin{abstract}
We consider the classical problems {\sc (Edge) Steiner Tree} and {\sc Vertex Steiner Tree} after restricting the input to some class of graphs characterized by a small set of forbidden induced subgraphs. We show a dichotomy for the former problem restricted to $(H_1,H_2)$-free graphs and a dichotomy for the latter problem restricted to $H$-free graphs. We find that there exists an infinite family of graphs $H$ such that {\sc Vertex Steiner Tree} is polynomial-time solvable for $H$-free graphs, whereas there exist only two graphs $H$ for which this holds for {\sc Edge Steiner Tree}
(assuming $\sP\neq\NP$). 
We also find that {\sc Edge Steiner Tree} is polynomial-time solvable for $(H_1,H_2)$-free graphs if and only if the treewidth of the class of $(H_1,H_2)$-free graphs is bounded (subject to $\sP\neq\NP$).
To obtain the latter result, we determine all pairs $(H_1,H_2)$ for which the class of $(H_1,H_2)$-free graphs has bounded treewidth.
\end{abstract}

\keywords{Steiner Tree; hereditary graph class; treewidth.}

\section{Introduction}

Let $G=(V,E)$ be a connected graph and $U\subseteq V$ be a set of \emph{terminal} vertices.
 A {\it Steiner tree} for $U$ (of $G$) is a tree in $G$ that contains all vertices of~$U$. An \emph{edge weighting} of $G$ is a function $w_E: E\rightarrow \mathbb{Q}^+$ (where $\mathbb{Q}^+$ denotes the set of strictly positive rational numbers).
For a tree $T$ in $G$, the {\it edge weight~$w_E(T)$} of $T$ is the sum $\sum_{e \in E(T)}w_E(e)$.
We consider the classical problem:

\problemdef{{\sc Edge Steiner Tree}}{a connected graph $G=(V,E)$ with an  weighting $w_E$, a subset $U\subseteq V$ of terminals and a positive integer $k$.}{does~$G$ have a Steiner tree $T_U$ for $U$ with $w_E(T_U)\leq k$?}

This is often known simply as {\sc Steiner Tree}, but we wish to distinguish it from a closely related problem.  
A \emph{vertex weighting} of $G$ is a function $w_V: V\rightarrow \mathbb{Q}^+$.
For a tree $T$ in $G$, the {\it vertex weight~$w_V(T)$} of $T$ is the sum 
 $\sum_{v\in V(T)}w(v)$.  The following problem is sometimes known as {\sc Node-Weighted Steiner Tree}.

\problemdef{{\sc Vertex Steiner Tree}}{a connected graph $G=(V,E)$ with a vertex weighting $w_V$, a subset $U\subseteq V$ and a positive integer~$k$.}{does~$G$ have a Steiner tree $T_U$ for $U$ with $w_V(T_U)\leq k$?}

Note that we defined the problem for connected inputs, as for general inputs a Steiner tree exists only if the vertices of $U$ all belong to the same connected component.
Note also that {\sc Edge Steiner Tree} 
is a generalization of the {\sc Spanning Tree} problem (set $U=V(G)$). 
We refer to the textbooks of Du and Hu~\cite{DH08} and Pr\"omel and Steger~\cite{PS12} for further background information on Steiner trees. 

We consider the problems {\sc Edge Steiner Tree} and {\sc Vertex Steiner Tree} separately so that, for any graph under consideration, we have either an edge or vertex weighting but not both, so we will generally denote weightings by $w$ without any subscript.  Moreover, when we use the following terminology there is no ambiguity.
We say that a Steiner tree of least possible weight is \emph{minimum}, and that an instance of one of the two problems is \emph{unweighted} if the weighting is constant.
We denote instances of the weighted problems by $(G,w,U,k)$ and of the unweighted problems by $(G,U,k)$.
It is well known that the unweighted versions of {\sc Edge Steiner Tree} and {\sc Vertex Steiner Tree} are \NP-complete~\cite{Ka72,FPW85}.
Moreover, as an $n$-vertex tree has exactly $n-1$ edges, one can make the following observation.

\begin{observation}\label{o-equivalent}
The unweighted versions of {\sc Edge Steiner Tree} and {\sc Vertex Steiner Tree} are equivalent.
\end{observation}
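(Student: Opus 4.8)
The plan is to exploit the elementary fact quoted just before the statement: every tree on $n$ vertices has exactly $n-1$ edges. Because both problems are \emph{unweighted}, the weighting $w$ is a constant, so by dividing through by this constant I may assume without loss of generality that $w\equiv 1$ in both the edge and the vertex version. Under this normalization the edge weight of a Steiner tree $T_U$ is exactly its number of edges $|E(T_U)|$, and its vertex weight is exactly its number of vertices $|V(T_U)|$, so the two problems become ``find a Steiner tree with few edges'' and ``find a Steiner tree with few vertices''.

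First I would relate the two objective values on a single tree. For any Steiner tree $T_U$ for $U$ we have $|E(T_U)| = |V(T_U)| - 1$, and hence $w_E(T_U)\le k$ if and only if $w_V(T_U)\le k+1$. Since this holds tree-by-tree, it lifts immediately to the corresponding existence questions: $G$ has a Steiner tree for $U$ of edge weight at most $k$ if and only if $G$ has a Steiner tree for $U$ of vertex weight at most $k+1$. I would then record this as a reduction in both directions, sending an instance $(G,U,k)$ of unweighted \textsc{Edge Steiner Tree} to the instance $(G,U,k+1)$ of unweighted \textsc{Vertex Steiner Tree}, and back again by the shift $k\mapsto k-1$. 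Both maps are computable in constant time and, crucially, leave the graph $G$ untouched; in particular an $H$-free (or $(H_1,H_2)$-free) instance is sent to an instance in the same class.

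There is essentially no technical obstacle here: the only points requiring a moment's care are the normalization of the constant weighting and the harmless off-by-one shift in the target $k$. The feature worth emphasizing is not the difficulty of the argument but its shape. Because the underlying graph is preserved verbatim by the reduction, every complexity result established for one unweighted problem on a hereditary graph class transfers directly to the other, which is precisely what allows the dichotomies for the two problems to be developed side by side in the remainder of the paper.
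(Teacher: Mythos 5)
Your proposal is correct and matches the paper's justification exactly: the paper derives the observation in one line from the fact that an $n$-vertex tree has $n-1$ edges, which is precisely the tree-by-tree identity $|E(T_U)|=|V(T_U)|-1$ and the resulting shift $k\mapsto k+1$ that you spell out. The only detail worth a passing glance is that normalizing a constant weighting to $1$ also rescales $k$, but since the paper denotes unweighted instances as $(G,U,k)$ with the weighting implicitly $1$, this changes nothing.
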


\noindent
{\bf Our Focus}
We focus on the computational
complexity of {\sc Edge Steiner Tree} and {\sc Vertex Steiner Tree} for {\it hereditary} graph classes, 
which are those
graph classes that are 
closed under vertex deletion.
We do this from a {\it systematic} point of view. 
It is well known, and readily seen, that a graph class ${\cal G}$ is hereditary if and only if it can be characterized by a set ${\cal H}$ of forbidden induced subgraphs. 
That is, a graph $G$ belongs to ${\cal G}$ if and only if $G$ has no induced subgraph isomorphic to some graph in ${\cal H}$. 
We normally require ${\cal H}$ to be minimal, in which case it is unique and we denote it by ${\cal H}_{\cal G}$.
We note that ${\cal H}_{\cal G}$ may have infinite size; for example, if ${\cal G}$ is the class of bipartite graphs, then ${\cal H}_{\cal G}=\{C_3,C_5,\ldots,\}$, where $C_r$ denotes the cycle on $r$ vertices.
For a systematic complexity study of a graph problem, we may first consider
 {\it monogenic graph classes} or {\it bigenic} graph classes, which are classes ${\cal G}$ with $|{\cal H}_{\cal G}|=1$ or $|{\cal H}_{\cal G}|=2$, respectively. This is the approach we follow here.

\medskip
\noindent
{\bf Our Results}
We prove a dichotomy for {\sc Edge Steiner Tree} for bigenic graph classes in Section~\ref{s-theproof0} and a dichotomy for {\sc Vertex Steiner Tree} for monogenic graph classes in Section~\ref{s-theproof1}.
We relate our first dichotomy to a dichotomy for boundedness of {\it treewidth} of bigenic graph classes.
The parameter treewidth, which we formally define in Section~\ref{s-treewidth}, measures how close a graph is to being a tree. If we wish to solve a problem on some graph class~${\cal G}$ and we know that~${\cal G}$ has small treewidth, then we can try to mimic efficient algorithms for trees to obtain efficient algorithms for ${\cal G}$.
Many discrete optimization problems can be solved in polynomial time on every graph class of bounded treewidth. 
The {\sc Edge Steiner Tree} problem is an example of such a problem (see, for instance,~\cite{CMZ12} or, for a faster algorithm~\cite{BCKN15}).

In order to describe our dichotomies we need to introduce some extra terminology.
We denote the {\it disjoint union} of two vertex-disjoint graphs $G$ and $H$ by $G+H=(V(G)\cup V(H), E(G)\cup E(H))$, and
the disjoint union of $s$ copies of $G$ by~$sG$.
A {\it linear forest} is a disjoint union of paths.
For a graph $H$, a graph is {\it $H$-free} if it has no induced subgraph isomorphic to $H$.
For a set of graphs $\{H_1,\ldots,H_p\}$, a graph is {\it $(H_1,\ldots,H_p)$-free} if it is $H_i$-free for every $i\in \{1,\ldots,p\}$.
We let $K_r$ and $P_r$ denote the complete graph and path on $r$ vertices, respectively.
The {\it complete bipartite} graph $K_{s,t}$ is the graph whose vertex set can be partitioned into two sets $S$ and $T$ of size $s$ and $t$,  respectively,
such that for any two distinct vertices $u,v$, we have $uv\in E$ if and only if $u\in S$ and $v\in T$ or vice versa. 
We call $K_{1,3}$ the {\it claw}.

\begin{theorem}\label{t-main0}
Let $H_1$ and $H_2$ be two graphs. If one of the following cases holds:
\begin{enumerate}
\item $H_1=K_r$ or $H_2=K_r$ for some $r\in \{1,2\}$;
\item $H_1=K_3$ and $H_2=K_{1,3}$, or vice versa;
\item $H_1=K_r$ for some $r\geq 3$ and $H_2=P_3$, or vice versa; or
\item $H_1=K_r$ for some $r\geq 3$ and $H_2=sP_1$ for some $s\geq 1$, or vice versa.\\[-15pt]
\end{enumerate}
then the treewidth of the class of $(H_1,H_2)$-free graphs is bounded and 
{\sc Edge Steiner Tree} is polynomial-time solvable for $(H_1,H_2)$-free graphs. In all other cases,
the treewidth of $(H_1,H_2)$-free graphs is unbounded and {\sc Edge Steiner Tree} is \NP-complete for $(H_1,H_2)$-free graphs.
\end{theorem}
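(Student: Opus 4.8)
The plan is to prove the two dichotomies together: a classification of when the class of $(H_1,H_2)$-free graphs has bounded treewidth, and the matching complexity dichotomy for \textsc{Edge Steiner Tree}. The ``bounded'' side of the complexity statement is immediate, since whenever the treewidth of the class is bounded, \textsc{Edge Steiner Tree} is solvable in polynomial time by dynamic programming on a tree decomposition~\cite{CMZ12,BCKN15}. Hence the real work splits into (I) showing that exactly Cases~1--4 yield bounded treewidth, and (II) proving \NP-completeness for every remaining pair; by construction the hard cases will coincide with the unbounded-treewidth cases. Throughout I would use that $G$ is $K_s$-free exactly when it has no clique of size $s$, that $P_3$-free graphs are disjoint unions of cliques, and that a $K_{1,t}$-free triangle-free graph has maximum degree at most $t-1$.

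For (I), first observe that some $H_i$ must be complete, since the family $\{K_n\}_{n\ge1}$ is $(H_1,H_2)$-free and of unbounded treewidth whenever neither $H_1$ nor $H_2$ is complete (the only induced subgraphs of $K_n$ are cliques). So assume $H_1=K_r$; if $r\le2$ the class is edgeless and Case~1 holds. For $r\ge3$ I would constrain $H_2$ with two families of $K_r$-free graphs of unbounded treewidth: the complete bipartite graphs $K_{n,n}$, whose induced subgraphs are exactly the complete bipartite graphs, force $H_2$ to be complete bipartite; and the subdivided walls, which are subcubic and of arbitrarily large girth, force $H_2$ to be a subcubic forest, since any induced subgraph of a subcubic graph of girth exceeding $|V(H_2)|$ is itself subcubic and acyclic. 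A complete bipartite forest is a star or an edgeless graph, and subcubicity caps stars at three leaves, so $H_2\in\{sP_1,K_2,P_3,K_{1,3}\}$; discarding the complete graph $K_2$ leaves $sP_1$, $P_3$ and the claw. Finally, to show the claw survives only for $r=3$, I would use the line graphs of subcubic triangle-free graphs of large girth: these are claw-free, $K_4$-free and of unbounded treewidth, hence lie in the class of $(K_r,K_{1,3})$-free graphs for every $r\ge4$ and witness unboundedness there. This leaves exactly Cases~2--4, and for the converse I would give explicit bounds: $(K_r,P_3)$-free graphs are disjoint unions of cliques of order at most $r-1$, $(K_r,sP_1)$-free graphs are finite by Ramsey's theorem, and $(K_3,K_{1,3})$-free graphs have maximum degree at most $2$.

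For (II) I would crucially exploit that \textsc{Edge Steiner Tree} is the weighted problem. If neither $H_i$ is complete, the class contains all $K_n$, and the problem is already \NP-complete there: take an arbitrary connected weighted instance, add all missing edges with a weight exceeding the total, and note the minimum Steiner tree is unchanged. If $H_1=K_r$ with $r\ge3$ but $H_2$ is not complete bipartite, the class contains all $K_{n,n}$, and the same weighting device reduces bipartite \textsc{Edge Steiner Tree}, which is \NP-complete, to complete bipartite graphs. If $H_1=K_r$ with $r\ge3$ and $H_2$ is complete bipartite but not a subcubic forest --- that is, $H_2=K_{s,t}$ with $s,t\ge2$, or $H_2=K_{1,t}$ with $t\ge4$ --- then the subcubic triangle-free graphs of arbitrarily large girth, obtained by subdividing a hard subcubic instance and distributing each edge weight along the new path, lie in the class and carry a hard instance. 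The only remaining case is $H_2=K_{1,3}$ with $r\ge4$, which I would handle via line graphs of subcubic triangle-free high-girth graphs.

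The step I expect to be the main obstacle is this last (claw, $r\ge4$) case: one must produce a hardness reduction whose output is simultaneously claw-free, $K_r$-free and of unbounded treewidth, and line graphs are essentially forced. The difficulty is to realize a hard \textsc{Edge Steiner Tree} instance as a Steiner-tree instance on the line graph of a sparse high-girth root graph --- so that a minimum tree in the line graph corresponds to a minimum connected edge set of the root containing the prescribed terminal edges --- while keeping the root subcubic, triangle-free and of large girth and calibrating the edge weights so the line-graph optimum faithfully encodes the root optimum. A secondary technical point, shared with the cyclic and large-star cases, is to confirm that subdivision preserves \NP-completeness of the weighted problem and that the constructed graphs avoid $H_2$ as an \emph{induced} rather than merely subgraph copy; here girth larger than $|V(H_2)|$, together with Observation~\ref{o-equivalent} to import base \NP-completeness, should make the remaining bookkeeping routine.
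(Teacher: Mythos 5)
Your overall architecture coincides with the paper's: the same four bounded cases with the same treewidth bounds, the same witness families for unboundedness (complete graphs, complete bipartite graphs, highly subdivided walls, and a claw-free $K_4$-free ``line-graph-like'' family for the remaining claw case), and the same heavy-edge and weight-splitting devices for completing graphs and for subdivisions. The case analysis is exhaustive and the treewidth half of the dichotomy is essentially complete. The gap is in the \NP-hardness half, and it is concentrated in exactly the two places where the paper needs bespoke gadgets. First, your sparse cases rest on ``a hard subcubic instance'' whose existence you never establish: the known base result (Garey--Johnson) is for grid graphs, which have degree~$4$, and getting from grids to walls requires the vertex-splitting reduction of the paper's Lemma~\ref{l-wall}, in which each grid vertex becomes a weight-$\varepsilon$ edge with $\varepsilon$ chosen so that the total weight of all new edges is below~$1$; this is not just ``subdividing and distributing weights along paths.''

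Second, and more seriously, the claw-with-$r\geq 4$ case that you yourself flag as the main obstacle is a genuine missing idea rather than routine bookkeeping. Your instinct to use line graphs runs into the problem you half-identify: in {\sc Edge} Steiner Tree on $L(G')$ the weights sit on \emph{pairs of adjacent edges} of the root graph, and a Steiner tree in $L(G')$ uses only a spanning tree of each clique arising from a star of the root, so there is no clean calibration making the line-graph optimum equal the root optimum. (The paper's own line-graph reduction, Lemma~\ref{l-line}, works precisely because it targets the \emph{vertex}-weighted problem, where vertices of $L(G')$ are edges of $G'$.) The paper circumvents this with the wye-net transformation: only the degree-$3$ vertices of a wall are blown up into triangles, the three triangle edges get weight $\varepsilon$ with the sum of \emph{all} new edge weights below the smallest original edge weight, all three triangle vertices of a terminal become terminals, and original edges keep their weights and remain in bijection with the wall's edges. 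This forces a weight-preserving correspondence (up to a controlled $2(k+1)\varepsilon$ additive term) between neat Steiner trees of the wall and of the net-wall, which is the content of Lemma~\ref{l-netwall}. Without this gadget (or an equivalent one) your proposal does not yield hardness for $(K_r,K_{1,3})$-free graphs with $r\geq 4$, so the reverse implication of the theorem is not proved.
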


\begin{theorem}\label{t-main}
Let $H$ be a graph. If $H$ is an induced subgraph of $sP_1+P_4$ for some $s\geq 0$, then {\sc Vertex Steiner Tree} is polynomial-time solvable for $H$-free graphs, otherwise even unweighted {\sc Vertex Steiner Tree} is \NP-complete.
\end{theorem}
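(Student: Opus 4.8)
The plan is to prove the two directions separately, and to open both with the same combinatorial observation. The graphs $H$ with $H\ssi sP_1+P_4$ for some $s\geq 0$ are exactly the linear forests having at most one component on more than one vertex, that component having at most four vertices; equivalently they form a hereditary class $\mathcal{L}$ whose minimal forbidden induced subgraphs are $2P_2$, $K_{1,3}$, $C_3$, $C_4$ and $C_5$. One checks that these are the only obstructions: any $C_k$ with $k\geq 6$ contains an induced $P_5$ after deleting a vertex, and $P_5$ (like every $P_k$ with $k\geq 5$) already contains an induced $2P_2$, so longer cycles and paths are not minimal. Hence $H\notin\mathcal{L}$ if and only if $H$ contains one of these five graphs as an induced subgraph, and this case split drives the hardness proof.

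For the polynomial cases I would first reduce to a single family: since $H\ssi sP_1+P_4$ forces the class of $H$-free graphs to lie inside the class of $(sP_1+P_4)$-free graphs, it suffices to solve {\sc Vertex Steiner Tree} on $(sP_1+P_4)$-free graphs for each fixed $s$. I would reformulate the task as finding a minimum-weight set $D\subseteq V\setminus U$ with $G[U\cup D]$ connected. The first structural ingredient is that a connected $(sP_1+P_4)$-free graph has no long induced path: an induced $P_m$ with $m\geq 2s+4$ yields an induced $P_4$ on $v_1v_2v_3v_4$ together with the pairwise non-adjacent vertices $v_6,v_8,\dots,v_{2s+4}$, which are anticomplete to it, giving an induced $sP_1+P_4$; thus the diameter is $O(s)$ and every minimum-weight connecting path is short. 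The crucial step, and the one I expect to be the main obstacle, is a lemma that some optimal $D$ uses only $f(s)$ Steiner vertices. The intuition is that a solution with many Steiner vertices must contain either a long induced path (just excluded) or many terminals attached to the core of the tree through private, pairwise-independent connectors; selecting one vertex from each such connector produces a large independent set anticomplete to an induced $P_4$, again yielding $sP_1+P_4$. Granting this bound, the algorithm merely enumerates the $O(n^{f(s)})$ candidate sets $D$ of size at most $f(s)$, tests connectivity of $G[U\cup D]$, and returns the minimum weight, which is polynomial for fixed $s$. Making the counting rigorous — carefully separating branch Steiner vertices, the short degree-two chains between them, and the boundedly many far terminals — is where the real work lies.

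For the \NP-completeness cases I would use Observation~\ref{o-equivalent} to pass to the unweighted problem and then exhibit, for each obstruction, a structured class of hard instances. If a minimal obstruction $O$ satisfies $O\ssi H$, then $O$-free graphs form a subclass of $H$-free graphs, so \NP-hardness on any class of $O$-free graphs transfers upward to $H$-free graphs. I would then invoke three reductions. If $H$ contains an induced $C_3$, every bipartite graph is $C_3$-free and hence $H$-free, so the classical \NP-hardness of unweighted Steiner Tree on bipartite graphs suffices. If $H$ contains an induced $C_4$, $C_5$ or $2P_2$, I use the F\"oldes--Hammer characterisation that split graphs are exactly the $(2P_2,C_4,C_5)$-free graphs, so split graphs are $H$-free in all three cases, together with \NP-hardness of Steiner Tree on split graphs via a reduction from \textsc{Set Cover} in which the clique vertices play the role of sets and the independent terminals the role of elements. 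Finally, if $H$ contains an induced $K_{1,3}$, every line graph is claw-free and hence $H$-free, and {\sc Edge Steiner Tree} on a line graph $L(G)$ — equivalently, a minimum connected subgraph of $G$ covering a prescribed set of edges — is \NP-hard by attaching a pendant edge to each terminal of an ordinary Steiner Tree instance. Since every $H\notin\mathcal{L}$ contains at least one of the five obstructions, these three reductions cover all remaining cases and complete the dichotomy.
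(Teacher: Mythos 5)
Your overall architecture matches the paper's: characterise the minimal obstructions, dispatch the hardness cases by exhibiting a hard $H$-free class for each obstruction, and handle the tractable cases by bounding the number of Steiner vertices in some optimal solution and brute-forcing. The hardness half is essentially complete and correct: your obstruction set $\{2P_2,K_{1,3},C_3,C_4,C_5\}$ is right, split graphs cover $2P_2$, $C_4$, $C_5$ (and the paper likewise uses Farber--Pulleyblank--White), line graphs cover $K_{1,3}$ via exactly the pendant-edge reduction the paper uses, and bipartite graphs cover $C_3$ (the paper cites the stronger chordal-bipartite result of Brandst\"adt and M\"uller, but plain bipartite suffices for $C_3$).

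The genuine gap is in the polynomial half, at exactly the point you flag: the lemma that some optimal Steiner set $R$ has $|R|\leq f(s)$ is asserted with intuition but not proved, and the way you propose to organise the count --- ``branch Steiner vertices'' of the tree and the ``degree-two chains between them'' --- would not work as stated, because the number of branch vertices of a Steiner tree is bounded only in terms of the number of terminals, not in terms of $s$. The paper closes the gap differently: take $R\subseteq V\setminus U$ of minimum weight with $G[U\cup R]$ connected. If $G[U\cup R]$ is $P_4$-free it is a connected cograph, hence has a spanning complete bipartite subgraph, which forces $|R|\leq 1$ --- a degenerate case your sketch does not address, since there is then no induced $P_4$ to anchor the counting on. Otherwise fix one induced path $P$ on four vertices in $G[U\cup R]$. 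At most $s-1$ components of $G[U]$ can avoid $N[V(P)]$ (else an independent transversal of $s$ of them together with $P$ gives an induced $sP_1+P_4$), and each such ``bad'' component is reached from $P$ by a shortest (hence induced) path in $G[U\cup R]$ with at most $2s+1$ internal vertices (a longer induced path itself contains $sP_1+P_4$). By minimality of $R$ (weights are strictly positive, so the sub-connector built from $V(P)$ and these internal vertices must equal $R$), one gets $|R|\leq 4+(2s+1)(s-1)=2s^2-s+3$, and the enumeration runs in time $O(n^{2s^2-s+5})$. Your two ingredients (no long induced paths; few far terminals) are the right ones, but they must be applied to the induced subgraph $G[U\cup R]$ relative to a fixed induced $P_4$, not to the tree structure of the solution.
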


\noindent
We make the following observations about these two results:

\medskip
\noindent
{\bf 1.} 
In Theorem~\ref{t-main0} we show that
{\sc Edge Steiner Tree} can be solved in polynomial time for $(H_1,H_2)$-free graphs if and only if the treewidth of the class of $(H_1,H_2)$-free graphs is bounded (assuming $\sP\neq\NP$). However, such a 1-to-1 correspondence holds neither between {\sc Vertex Steiner Tree} and treewidth, nor between {\sc Vertex Steiner Tree} and the less restrictive width parameter mim-width. This can be seen as follows.
It is known that {\sc Vertex Steiner Tree} is polynomial-time solvable for a graph class of bounded mim-width provided that
a branch decomposition of constant mim-width can be found in polynomial time for the class~\cite{BK19}.
In particular, every graph class of bounded treewidth has this property.
However, complete graphs, and hence $P_4$-free graphs, have unbounded treewidth, whereas co-bipartite graphs, and hence $3P_1$-free graphs, have unbounded mim-width~\cite{BCM15}.
In Section~\ref{s-con} we discuss this connection between {\sc Edge Steiner Tree} and treewidth further. 
In the same section we also pose a number of open problems.

\medskip
\noindent
{\bf 2.}
Theorem~\ref{t-main0} also provides a dichotomy for boundedness of treewidth of $(H_1,H_2)$-free graphs. For the less restrictive width parameter clique-width 
(or equivalently boolean-width, rank-width, module-width, or NLC-width~\cite{BTV11,Jo98,OS06,Ra08}) or the even less restrictive width parameter mim-width,
such dichotomies have not yet been established for $(H_1,H_2)$-free graphs. We refer to~\cite{DJP19} and~\cite{BHMPP} for state-of-the-art summaries for clique-width and mim-width, respectively.

\medskip
\noindent
{\bf 3.}
The restriction of Theorem~\ref{t-main0} to monogenic graph classes, that is, taking $H=H_1=H_2$, yields only two (trivial) graphs $H$, namely $H=K_1$ or $H=K_2$, for which the restriction of {\sc Edge Steiner Tree} to $H$-free graphs can be solved in polynomial time. In contrast, by Theorem~\ref{t-main}, {\sc Vertex Steiner Tree} can, when restricted to $H$-free graphs, be solved in polynomial time for an infinite family of linear forests~$H$, namely $H=sP_1+P_4$ $(s\geq 0$).

\medskip
\noindent
{\bf 4.}
Theorem~\ref{t-main} is also a dichotomy for the unweighted {\sc Vertex Steiner Tree} problem.
Moreover, as the unweighted versions of {\sc Edge Steiner Tree} and {\sc Vertex Steiner Tree} are equivalent by Observation~\ref{o-equivalent}, Theorem~\ref{t-main} is also a classification of the unweighted version of {\sc Edge Steiner Tree}.

\section{Proof of Theorem~\ref{t-main0}}\label{s-theproof0}

In this section we give a proof for our first dichotomy, which is for {\sc Edge Steiner Tree} for $(H_1,H_2)$-free graphs. 
We note that this is not the first systematic study of {\sc Edge Steiner Tree}. 
For example, Renjitha and Sadagopan~\cite{RS18} proved that  unweighted {\sc Edge Steiner Tree} is \NP-complete for $K_{1,5}$-free split graphs, but can be solved in polynomial time for $K_{1,4}$-free split graphs. 
We present a number of other results from the literature, which we collect in Section~\ref{s-pre}, together with some lemmas that follow from these results. Then in Section~\ref{s-treewidth} we discuss the notion of treewidth; as mentioned, this notion will play an important role. 
We then use these results to prove Theorem~\ref{t-main0} in Section~\ref{s-theproof}.

\subsection{Preliminaries} \label{s-pre}

The \NP-completeness of {\sc Edge Steiner Tree} on complete graphs follows from the result~\cite{Ka72} that the general problem is \NP-complete: to obtain a reduction add any missing edges and give them sufficiently large weight such that they will never be used in any solution. Bern and Plasman proved 
the following stronger result.

\begin{lemma}[\cite{BP89}]\label{l-complete}
{\sc Edge Steiner Tree} is \NP-complete for complete graphs where every edge has weight~$1$ or~$2$.
\end{lemma}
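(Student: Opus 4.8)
The plan is to reduce from a restricted covering problem in order to control the interaction between weight-$1$ and weight-$2$ edges, since the naive ``metric completion'' of an arbitrary \textsc{Edge Steiner Tree} instance (weight $1$ on edges, weight $2$ on non-edges) fails: a weight-$2$ edge between two vertices at distance more than two in the original graph is a genuine shortcut that cannot be realised by a cheap path, so the backward direction of such a correspondence breaks. I would instead reduce from \textsc{Exact Cover by 3-Sets} (X3C), which is \NP-complete: given a universe $X=\{x_1,\dots,x_{3q}\}$ and a family $\mathcal{C}=\{C_1,\dots,C_m\}$ of $3$-element subsets of $X$, decide whether some subfamily exactly covers $X$.

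Membership in \NP{} is immediate, as a Steiner tree is a certificate of polynomial size whose weight can be verified in polynomial time. For hardness, I would build a complete graph $G'$ on vertex set $\{r\}\cup X\cup\mathcal{C}$ with edge weights in $\{1,2\}$ defined as follows: each edge $rC_j$ has weight~$1$; an edge $C_jx_i$ has weight~$1$ if $x_i\in C_j$ and weight~$2$ otherwise; and every remaining edge (namely $rx_i$, $x_ix_{i'}$, and $C_jC_{j'}$) has weight~$2$. I set the terminal set to $U=\{r\}\cup X$ and the budget to $k=4q$. If an exact cover $\{C_{j_1},\dots,C_{j_q}\}$ exists, then joining $r$ to each chosen set and each $x_i$ to the unique chosen set containing it yields a tree on $1+4q$ vertices using only weight-$1$ edges, hence of weight exactly $4q$.

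The crux is the converse, which I would establish by a counting argument. Suppose $G'$ has a Steiner tree $T$ for $U$ of weight at most $4q$, using $s$ of the set-vertices as Steiner vertices and containing $a$ edges of weight~$2$. Then $T$ has $1+3q+s$ vertices and therefore $3q+s$ edges, so its weight equals $(3q+s)+a\le 4q$, giving $s+a\le q$. Rooting $T$ at $r$, each $x_i$ reaches the tree through its parent edge; a weight-$1$ parent edge must lead to a set-vertex containing $x_i$, and each set-vertex can serve as the parent of at most three such $x_i$. Hence the number of elements reached by weight-$1$ parent edges is at most $3s$, while the remaining elements use weight-$2$ parent edges, so $3s+a\ge 3q$. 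Combining $3s+a\ge 3q$ with $a\le q-s$ yields $3q-3s\le q-s$, forcing $s\ge q$; together with $s+a\le q$ this gives $s=q$ and $a=0$. With $q$ set-vertices, each the parent of exactly three distinct elements and no weight-$2$ edges present, the chosen sets form an exact cover of $X$.

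I expect the main obstacle to be precisely this double counting: one must argue simultaneously that \emph{too few} chosen sets cannot reach all $3q$ terminals via weight-$1$ edges, and that compensating for missing coverage with weight-$2$ edges is never within budget, so that the bound $4q$ is attained only by a perfect cover. Since the construction produces a complete graph with all weights in $\{1,2\}$ and the reduction is clearly polynomial, this establishes \NP-completeness in the claimed restricted setting.
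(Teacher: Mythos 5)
The paper does not prove this lemma: it is imported verbatim from Bern and Plassmann~\cite{BP89}, and the surrounding discussion only sketches the weaker statement without the weight restriction (complete the graph and give the added edges prohibitively large weight). So there is no in-paper proof to compare against; what you have written is a correct, self-contained argument, essentially the classical X3C-style reduction underlying the cited result. Your opening observation is also the right one: the naive metric completion with weights $1$ and $2$ genuinely fails, because a weight-$2$ chord between vertices at distance at least $3$ is a shortcut with no cheap realisation, and this is exactly why the restricted statement needs its own reduction rather than following from~\cite{Ka72}. I checked the counting argument and it is sound: since every non-terminal of $G'$ is a set-vertex (this justifies the vertex count $1+3q+s$, a step you use implicitly), a Steiner tree with $s$ set-vertices has $3q+s$ edges and weight $3q+s+a$, giving $s+a\le q$; rooting at $r$, at most $3s$ elements can acquire weight-$1$ parent edges, so $a\ge 3q-3s$; together these force $s=q$ and $a=0$, and then each chosen set must be the parent of exactly its own three elements (two chosen sets sharing an element would leave one of them with at most two available children), which yields the exact cover. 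Relative to the paper, your route buys self-containedness at the cost of re-proving a known theorem; the reference~\cite{BP89} in fact establishes more (hardness of approximation for lengths $1$ and $2$), but that strength is not needed here.
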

To subdivide an edge $e=uv$ means to delete $e$ and add a vertex $w$ and edges $uw$ and $vw$. 
Let $r$ be a positive integer.
 To say that $e$ is subdivided $r$ times means that  $e$ is replaced by a path $P_e=uw_1\cdots w_rv$ of $r+1$ edges.
The {\it $r$-subdivision} of a graph $H$ is the graph obtained from $H$ after subdividing each edge exactly $r$ times.
If we say that a graph is a \emph{subdivision} of $H$, then we mean it can be obtained from $H$ using subdivisions (the number of subdivisions can be different for each edge and some edges might not be subdivided at all).
A graph $G$ contains a graph $H$ as a {\it subdivision} if $G$ contains a subdivision of $H$ as a subgraph.

\begin{proposition}
\label{p-hardsupport}
If {\sc Edge Steiner Tree} is \NP-complete for a class~$\mathcal{C}$ of graphs, then for every $r\geq 0$, it is also \NP-complete for the class of $r$-subdivisions of graphs in $\mathcal{C}$.
\end{proposition}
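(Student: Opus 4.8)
The plan is to reduce {\sc Edge Steiner Tree} on $\mathcal{C}$, which is \NP-complete by hypothesis, to {\sc Edge Steiner Tree} on the class of $r$-subdivisions of graphs in $\mathcal{C}$. The case $r=0$ is trivial, since the $0$-subdivision of a graph is the graph itself, so I would assume $r\geq 1$. Given an instance $(G,w_E,U,k)$ with $G\in\mathcal{C}$, let $G'$ be the $r$-subdivision of $G$; note that $G'$ is an $r$-subdivision of a graph in $\mathcal{C}$ and is connected because $G$ is, and that the original vertices of $G$ (in particular the terminals in $U$) survive in $G'$. For each edge $e=uv$ of $G$, replaced in $G'$ by the path $P_e=uw_1\cdots w_rv$ of $r+1$ edges, I would assign each of these $r+1$ edges the weight $w_E(e)/(r+1)$; this is a positive rational, as required, and the total weight of $P_e$ equals $w_E(e)$. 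Keeping the same terminal set $U$ and the same bound $k$, I would claim that $(G,w_E,U,k)$ is a yes-instance if and only if $(G',w_E',U,k)$ is, where $w_E'$ denotes the new weighting.

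The heart of the argument is the structural claim that every minimum Steiner tree $T'$ for $U$ in $G'$ uses each path $P_e$ either entirely or not at all. To see this, I would first observe that in a minimum Steiner tree no leaf is a non-terminal, since such a leaf together with its incident edge could be deleted to obtain a lighter Steiner tree for $U$. Each subdivision vertex $w_i$ is a non-terminal of degree exactly $2$ in $G'$, so if $w_i\in V(T')$ then it is not a leaf of $T'$ and hence both edges of $P_e$ incident to it lie in $T'$. A short propagation argument along $P_e$, starting from any used edge and running out to both endpoints $u$ and $v$, then shows that if $T'$ contains one edge of $P_e$ it contains all of them.

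Given this claim, the two directions of the equivalence follow from a weight-preserving correspondence between Steiner trees. From a Steiner tree $T$ in $G$, replacing each edge $e\in E(T)$ by the path $P_e$ yields a Steiner tree $T'$ in $G'$ with $w_E'(T')=w_E(T)$. Conversely, from a minimum Steiner tree $T'$ in $G'$, the subgraph of $G$ with edge set $\{e : P_e\subseteq T'\}$ is a Steiner tree $T$ for $U$ (it is acyclic, as $T'$ is, and it connects $U$) with $w_E(T)=w_E'(T')$. Hence the minimum Steiner-tree weights for $U$ in $G$ and in $G'$ coincide, so one instance meets the bound $k$ precisely when the other does. As the construction is polynomial and {\sc Edge Steiner Tree} is in \NP, this establishes \NP-completeness on the class of $r$-subdivisions of graphs in $\mathcal{C}$.

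The main obstacle I expect is making the structural claim fully rigorous: one must argue carefully that a minimum Steiner tree has no non-terminal leaves, and then carry out the degree-$2$ propagation along $P_e$ cleanly from an arbitrary used edge to both endpoints. The positivity constraint on the weighting (the fact that $w_E$ maps into $\mathbb{Q}^+$) is what prevents us from simply zeroing out the extra edges of each $P_e$; the even split $w_E(e)/(r+1)$ sidesteps this while keeping the total weight of each path exactly equal to $w_E(e)$, which is what makes the correspondence weight-preserving.
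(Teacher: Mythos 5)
Your proof is correct and follows essentially the same route as the paper: the same $r$-subdivision construction with each edge of $P_e$ given weight $w_E(e)/(r+1)$, the same key claim that a minimum Steiner tree uses each $P_e$ entirely or not at all, and the same weight-preserving correspondence. You merely spell out (via non-terminal leaves and degree-$2$ propagation) the justification that the paper leaves implicit.
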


\begin{proof}
Let $(G,w,U,k)$ be an instance of {\sc Edge Steiner Tree} where $G \in \mathcal{C}$. 
Let~$G'$ be the $r$-subdivision of $G$ and for each edge $e$ in $G$, let $P_e$ be the corresponding path on $r+1$ edges in $G'$.
We define an edge weighting $w'$ for $G'$ by letting $w'(e')=w(e)/{(r+1)}$ for each $e\in E(G)$ and for each $e'\in E(P_e)$. 
In any minimum Steiner tree $T'$ for $U$ of $(G',w')$, for each $e \in E$, either all or no edges of $P_e$ are in $T'$.  Then it is easy to see that there is a bijection that preserves weight between minimum Steiner trees of $(G,w)$ for $U$ and minimum Steiner trees of $(G',w')$ for $U$.
\qed\end{proof}

We make the following observation.

\begin{lemma}
\label{l-cbg}
{\sc Edge Steiner Tree} is \NP-complete for complete bipartite graphs.
\end{lemma}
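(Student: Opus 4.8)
The plan is to chain together the two results established just above this statement: the hardness of {\sc Edge Steiner Tree} on complete graphs (Lemma~\ref{l-complete}) and the hardness-preservation under subdivision (Proposition~\ref{p-hardsupport}). Membership in \NP{} is immediate, since a Steiner tree is a polynomial-size certificate whose weight can be checked in polynomial time, so I would only argue \NP-hardness.

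First I would take a $1$-subdivision. If $G$ is a complete graph and $G'$ is its $1$-subdivision, then $G'$ is bipartite: one side $A$ consists of the original vertices of $G$, the other side $B$ consists of the subdivision vertices (one per edge of $G$), and each subdivision vertex is adjacent only to the two endpoints of the edge it subdivides. By Lemma~\ref{l-complete} together with Proposition~\ref{p-hardsupport} applied with $r=1$, {\sc Edge Steiner Tree} is therefore \NP-complete on the class of such bipartite graphs $G'$. Note that the terminal set $U\subseteq A$ and the weight bound $k$ are inherited from the complete-graph instance, since the weight-preserving bijection of Proposition~\ref{p-hardsupport} guarantees that the minimum Steiner-tree weights of $G$ and $G'$ coincide.

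It then remains to pass from $G'$ to a complete bipartite graph. Given an instance $(G',w',U,k)$ as above, I would form $K_{|A|,|B|}$ on the same bipartition $(A,B)$ by adding every edge between $A$ and $B$ that is missing in $G'$, assigning each added edge a large weight $W$ with $W>k$ (for instance, $W$ exceeding the total weight of $E(G')$). The terminals and the bound $k$ are left unchanged. This construction is polynomial-time, and the resulting graph is complete bipartite and connected, as the problem requires.

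For correctness I would prove the two instances equivalent. Every Steiner tree of $G'$ is a Steiner tree of $K_{|A|,|B|}$ of the same weight, because $G'$ is a subgraph; hence a solution of weight at most $k$ for $(G',w',U,k)$ yields one for the complete bipartite instance. Conversely, any Steiner tree of weight at most $k$ in $K_{|A|,|B|}$ cannot contain an added edge, since a single such edge already contributes weight $W>k$, so it uses only edges of $G'$ and is a Steiner tree of $G'$ of weight at most $k$. The only point requiring care is the choice of $W$ large enough that the padding edges are never used in any tree meeting the bound; once $W>k$ this is automatic, so I expect no real obstacle beyond this weight bookkeeping. This mirrors precisely the padding argument that gives hardness for complete graphs, now adapted to force all paths through the subdivision side~$B$.
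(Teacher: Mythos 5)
Your proposal is correct and follows essentially the same route as the paper: obtain hardness on bipartite graphs via the $1$-subdivision and Proposition~\ref{p-hardsupport}, then complete the bipartition by adding the missing edges with a weight too large to appear in any solution within the bound. The only cosmetic difference is that the paper first caps $k$ at the total edge weight $M$ and assigns the new edges weight $M+1$, whereas you impose $W>k$ directly; both choices make the padding edges unusable, so the arguments coincide.
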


\begin{proof}
As {\sc Edge Steiner Tree} is \NP-complete in general, and the $1$-subdivision of any graph is bipartite, the problem remains 
 \NP-complete for bipartite graphs by Proposition~\ref{p-hardsupport}.
We describe a reduction from the problem on bipartite graphs to the problem on complete bipartite graphs. 
  Let $(G,w,U,k)$ be an instance of {\sc Edge Steiner Tree} where~$G$ is bipartite.  Let $M$ be the sum of the weights of all the edges of $G$ and note that if $k > M$ we can redefine it as $k=M$ without changing the problem.  Let $G'$ be the complete bipartite graph obtained from $G$ by adding all missing edges that preserve the bipartition.  Let $w'$ be an edge weighting for $G'$ where $w'(e)=w(e)$ if $e$ belongs to $G$ and $w'(e)=M+1$ otherwise. 
 Thus $(G',w',U,k)$ is an instance of {\sc Edge Steiner Tree} for complete bipartite graphs and clearly~$G$ has a Steiner tree $T_U$ of weight at most $k$ if and only if $G'$ has a Steiner tree~$T_U'$ of weight at most $k$.
\qed
\end{proof}

The next theorem follows by inspection of the reduction of Garey and Johnson for {\sc Rectilinear Steiner Tree}~\cite{GJ77}. Let $n$ and $m$ be positive integers. An $n \times m$ grid graph has vertex set $\{v^{i,j} \mid  1 \leq i \leq n, 1 \leq j \leq m\}$ and $v^{i,j}$ has neighbours $v^{i-1,j}$ (if $i > 1$), $v^{i+1,j}$ (if $i < n$), $v^{i,j-1}$ (if $j > 1$), and $v^{i,j+1}$ (if $j < m$). 
Think of $v^{1,1}$ as the top-left corner of the grid, and in $v^{i,j}$, $i$ indicates the row of the grid containing the vertex, while $j$ indicates the column.

\begin{theorem}[\cite{GJ77}] \label{t-grid-est}
Unweighted {\sc Edge Steiner Tree}  is \NP-complete for grid graphs.
\end{theorem}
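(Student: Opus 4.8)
The plan is to obtain the result directly from the reduction establishing that the \textsc{Rectilinear Steiner Tree} problem is \NP-complete, as given by Garey and Johnson~\cite{GJ77}. Recall that an instance of that problem is a finite set of points in the plane with integer coordinates together with a bound $L$, and one asks for a tree consisting of axis-parallel segments that connects all the points and has total length at most $L$. The first thing I would check is that the terminal points produced by their reduction have integer coordinates that are polynomially bounded in the size of the original instance, so that all terminals lie inside an $n \times m$ bounding box with $n$ and $m$ polynomial. Membership in \NP{} is immediate, since a candidate Steiner tree is a subgraph of polynomial size whose connectivity, inclusion of all terminals, and edge count can all be verified in polynomial time.

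Given such an instance, I would construct the $n \times m$ grid graph $G$ whose vertices $v^{i,j}$ are exactly the lattice points of the bounding box, let $U$ be the set of vertices $v^{i,j}$ corresponding to the terminal points, and set $k=L$. This construction is clearly polynomial, and $G$ is connected, so $(G,U,k)$ is a valid instance of unweighted \textsc{Edge Steiner Tree}. Because the instance is unweighted, the weight of any Steiner tree equals its number of edges, so it remains to argue that the minimum rectilinear length $L^*$ for the terminals equals the minimum number of edges $k^*$ of a Steiner tree for $U$ in $G$; the two instances are then equivalent with threshold $k=L$.

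To prove $L^* = k^*$ I would argue by two inequalities. For $k^* \le L^*$ I would invoke the Hanan grid theorem: there is a minimum rectilinear Steiner tree all of whose segments run along the horizontal and vertical lines through the terminals, hence with endpoints at lattice points of the bounding box. Subdividing each such segment into unit pieces turns this tree into a connected acyclic subgraph of $G$ containing $U$ whose edge count equals the total length $L^*$. For $L^* \le k^*$ I would take a minimum Steiner tree of $G$ and realise each of its edges as a unit axis-parallel segment in the plane; the result is a rectilinear Steiner tree for the terminals of length equal to its number of edges $k^*$. The main obstacle, and the point the phrase ``by inspection'' is really carrying, is exactly this grid-restriction step: one must confirm both that Garey and Johnson's coordinates are polynomially bounded (so that $G$ has polynomial size) and that an optimal rectilinear Steiner tree can be taken on the unit lattice, for which the Hanan grid theorem is the key ingredient.
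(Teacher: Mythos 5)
Your proposal is correct and follows essentially the same route as the paper, which offers no explicit proof here but simply asserts that the theorem follows ``by inspection of the reduction of Garey and Johnson''~\cite{GJ77}; your write-up supplies precisely the two checks that this phrase is standing in for, namely that the integer coordinates produced by that reduction are polynomially bounded (so the bounding-box grid graph has polynomial size) and that, by Hanan's theorem, an optimal rectilinear Steiner tree for integer terminals can be taken on the unit lattice, so that minimum rectilinear length and minimum grid-graph edge count coincide. The only cosmetic quibble is that the subdivided optimal rectilinear tree need not be acyclic as a subgraph of the grid (geometric crossings can create cycles), but passing to a spanning tree of that subgraph only decreases the edge count, so the inequality $k^*\leq L^*$ you need still holds.
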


\newcommand{\upv}[1]{\ensuremath{v^{#1}_{\uparrow}}}
\newcommand{\downv}[1]{\ensuremath{v^{#1}_{\downarrow}}}

\begin{figure}[tb]
\begin{minipage}{0.5\textwidth}
\centering
\begin{tikzpicture}[scale=0.85]
\draw (-2,0)--(2,0)--(2,1)--(-3,1)--(-3,2)--(1,2)--(1,1) (-2,0)--(-2,1) (0,0)--(0,1) (-1,1)--(-1,2);
\draw[fill=black]
(-2,0) circle [radius=2pt] (-1,0) circle [radius=2pt] (0,0) circle [radius=2pt] (1,0) circle [radius=2pt] (2,0) circle [radius=2pt] (-3,1) circle [radius=2pt] (-2,1) circle [radius=2pt] (-1,1) circle [radius=2pt] 
(0,1) circle [radius=2pt] (1,1) circle [radius=2pt] (2,1) circle [radius=2pt] (-3,2) circle [radius=2pt] (-2,2) circle [radius=2pt] (-1,2) circle [radius=2pt] (0,2) circle [radius=2pt] (1,2) circle [radius=2pt];
\end{tikzpicture}
\end{minipage}
\begin{minipage}{0.2\textwidth}
\centering
\begin{tikzpicture}[scale=0.85]
\draw(2,0)--(-2,0)--(-2,0.7)--(-2.3,1)--(-3,1)--(-3,2)--(1,2)--(1,1.3)--(1.3,1)--(2,1)--(2,0) (-1,1.7)--(-1.3,2) (-0.7,2)--(-1,1.7)--(-1,1.3)--(-1.3,1)--(-1.7,1)--(-2.3,1) (-1.7,1)--(-2,0.7)
(-0.3,0)--(0,0.3)--(0,0.7)--(-0.3,1)--(-0.7,1)--(-1.3,1) (-0.7,1)--(-1,1.3) (-0.3,1)--(0.3,1)--(0.7,1)--(1,1.3) (0.7,1)--(1.3,1) (0.3,1)--(0,0.7) (0,0.3)--(0.3,0);
\draw[fill=black] (-2,0) circle [radius=2pt] (-1,0) circle [radius=2pt] (-0.3,0) circle [radius=2pt] (0.3,0) circle [radius=2pt] (0,0.3) circle [radius=2pt] (1,0) circle [radius=2pt] (2,0) circle [radius=2pt] 
(-3,1) circle [radius=2pt] (-1.7,1) circle [radius=2pt] (-2.3,1) circle [radius=2pt] (-2,0.7) circle [radius=2pt] (-1.3,1) circle [radius=2pt] (-0.7,1) circle [radius=2pt] (-1,1.3) circle [radius=2pt] (-0.3,1) circle [radius=2pt]
(0.3,1) circle [radius=2pt] (0,0.7) circle [radius=2pt] (0.3,1) circle [radius=2pt] (0.7,1) circle [radius=2pt] (1.3,1) circle [radius=2pt] (1,1.3) circle [radius=2pt] (2,1) circle [radius=2pt] 
(-3,2) circle [radius=2pt] (-2,2) circle [radius=2pt] (-0.7,2) circle [radius=2pt] (-1.3,2) circle [radius=2pt] (-1,1.7) circle [radius=2pt] (0,2) circle [radius=2pt] (1,2) circle [radius=2pt];
\end{tikzpicture}
\end{minipage}
\caption{A wall of height~$2$ and the net-wall obtained by applying a wye-net transformation.}\label{fig:walls1}

\end{figure}
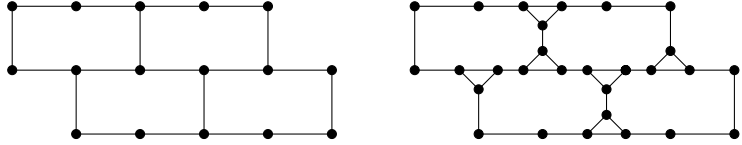

\begin{figure}\hspace{-120pt}
\begin{center}
\begin{tikzpicture}[scale=1]
\draw (-2,0)--(-2,-1)--(4,-1)--(4,0)--(-3,0)--(-3,1)--(4,1)--(4,2)--(-2,2)--(-2,1) (0,-1)--(0,0) (0,1)--(0,2) (2,-1)--(2,0) (2,1)--(2,2)(-1,0)--(-1,1) (1,0)--(1,1) (3,0)--(3,1); \draw[fill=black] (-2,-1) circle [radius=2pt] (-1,-1) circle [radius=2pt] (0,-1) circle [radius=2pt] (1,-1) circle [radius=2pt] (2,-1) circle [radius=2pt] (3,-1) circle [radius=2pt] (4,-1) circle [radius=2pt] (-3,0) circle [radius=2pt] (-2,0) circle [radius=2pt] (-1,0) circle [radius=2pt] (0,0) circle [radius=2pt] (1,0) circle [radius=2pt] (2,0) circle [radius=2pt] (3,0) circle [radius=2pt] (4,0) circle [radius=2pt] (-3,1) circle [radius=2pt] (-2,1) circle [radius=2pt] (-1,1) circle [radius=2pt] (0,1) circle [radius=2pt] (1,1) circle [radius=2pt] (2,1) circle [radius=2pt] (3,1) circle [radius=2pt] (4,1) circle [radius=2pt] (-2,2) circle [radius=2pt] (-1,2) circle [radius=2pt] (0,2) circle [radius=2pt] (1,2) circle [radius=2pt] (2,2) circle [radius=2pt] (3,2) circle [radius=2pt] (4,2) circle [radius=2pt]; \node[above] at (-2,2) {$\downv{1,1}$}; \node[above] at (-1,2) {$\upv{1,2}$}; \node[above] at (0,2) {$\downv{1,2}$}; \node[above] at (1,2) {$\upv{1,3}$}; \node[above] at (2,2) {$\downv{1,3}$}; \node[above] at (3,2) {$\upv{1,4}$}; \node[above] at (4,2) {$\downv{1,4}$}; \node[above left] at (-3,1) {$\downv{2,1}$}; \node[above left] at (-1.9,1) {$\upv{2,1}$}; \node[above] at (-1,1) {$\downv{2,2}$}; \node[above left] at (0.1,1) {$\upv{2,2}$}; \node[above] at (1,1) {$\downv{2,3}$}; \node[above left] at (2.1,1) {$\upv{2,3}$}; \node[above] at (3,1) {$\downv{2,4}$}; \node[above left] at (4.1,1) {$\upv{2,4}$}; \node[above left] at (-3,0) {$\upv{3,1}$}; \node[above] at (-2,0) {$\downv{3,1}$}; \node[above left] at (-0.9,0) {$\upv{3,2}$}; \node[above] at (0,0) {$\downv{3,2}$}; \node[above left] at (1.1,0) {$\upv{3,3}$}; \node[above] at (2,0) {$\downv{3,3}$}; \node[above left] at (3.1,0) {$\upv{3,4}$}; \node[above] at (4,0) {$\downv{3,4}$}; \node[above left] at (-1.9,-1) {$\upv{4,1}$}; \node[above] at (-1,-1) {$\downv{4,2}$}; \node[above left] at (0.1,-1) {$\upv{4,2}$}; \node[above] at (1,-1) {$\downv{4,3}$}; \node[above left] at (2.1,-1) {$\upv{4,3}$}; \node[above] at (3,-1) {$\downv{4,4}$}; \node[above left] at (4.1,-1) {$\upv{4,4}$};
\end{tikzpicture}
\end{center}
\caption{A wall of height~$3$ (obtained from a grid with $n=4$ and $m=4$) and the labelling of its vertices per Lemma~\ref{l-wall}. Note that $v^{n,1}$ and $v^{1,1}$ are exceptional.}\label{fig:walls2}
\end{figure}

\begin{figure}
\begin{center}
\begin{tikzpicture}[scale=1]
\draw (-4,2)--(4,2) (-4,1)--(5,1) (-4,0)--(5,0) (-4,-1)--(5,-1) (-3,-2)--(5,-2) (-4,2)--(-4,1) (-4,0)--(-4,-1)(-2,2)--(-2,1) (-2,0)--(-2,-1) (0,2)--(0,1) (0,0)--(0,-1)  (2,2)--(2,1) (2,0)--(2,-1) (4,2)--(4,1) (4,0)--(4,-1) (-3,1)--(-3,0) (-3,-1)--(-3,-2) (-1,1)--(-1,0) (-1,-1)--(-1,-2) (1,1)--(1,0) (1,-1)--(1,-2) (3,1)--(3,0) (3,-1)--(3,-2) (5,1)--(5,0) (5,-1)--(5,-2); \draw[fill=black] 
(-3,-2) circle [radius=2pt] (-2,-2) circle [radius=2pt] (-1,-2) circle [radius=2pt] (0,-2) circle [radius=2pt] (1,-2) circle [radius=2pt] (2,-2) circle [radius=2pt] (3,-2) circle [radius=2pt] (4,-2) circle [radius=2pt] (5,-2) circle [radius=2pt] (-4,-1) circle [radius=2pt](-3,-1) circle [radius=2pt] (-2,-1) circle [radius=2pt] (-1,-1) circle [radius=2pt] (0,-1) circle [radius=2pt] (1,-1) circle [radius=2pt] (2,-1) circle [radius=2pt] (3,-1) circle [radius=2pt] (4,-1) circle [radius=2pt] (5,-1) circle [radius=2pt] (-4,0) circle [radius=2pt] (-3,0) circle [radius=2pt] (-2,0) circle [radius=2pt](-1,0) circle [radius=2pt] (0,0) circle [radius=2pt] (1,0) circle [radius=2pt] (2,0) circle [radius=2pt] (3,0) circle [radius=2pt] (4,0) circle [radius=2pt] (5,0) circle [radius=2pt] (-4,1) circle [radius=2pt] (-3,1) circle [radius=2pt] (-2,1) circle [radius=2pt] (-1,1) circle [radius=2pt] (0,1) circle [radius=2pt] (1,1) circle [radius=2pt] (2,1) circle [radius=2pt] (3,1) circle [radius=2pt] (4,1) circle [radius=2pt] (5,1) circle [radius=2pt] (-4,2) circle [radius=2pt] (-3,2) circle [radius=2pt] (-2,2) circle [radius=2pt] (-1,2) circle [radius=2pt] (0,2) circle [radius=2pt] (1,2) circle [radius=2pt] (2,2) circle [radius=2pt] (3,2) circle [radius=2pt] (4,2) circle [radius=2pt]; \node[above] at (-4,2) {$\downv{1,1}$}; \node[above] at (-3,2) {$\upv{1,1}$}; \node[above] at (-2,2) {$\downv{1,2}$}; \node[above] at (-1,2) {$\upv{1,2}$}; \node[above] at (0,2) {$\downv{1,3}$}; \node[above] at (1,2) {$\upv{1,3}$}; \node[above] at (2,2) {$\downv{1,4}$}; \node[above] at (3,2) {$\upv{1,4}$}; \node[above] at (4,2) {$\downv{1,5}$}; \node[above left] at (-3.9,1) {$\upv{2,1}$}; \node[above] at (-3,1) {$\downv{2,1}$}; \node[above left] at (-1.9,1) {$\upv{2,2}$}; \node[above] at (-1,1) {$\downv{2,2}$}; \node[above left] at (0.1,1) {$\upv{2,3}$}; \node[above] at (1,1) {$\downv{2,3}$}; \node[above left] at (2.1,1) {$\upv{2,4}$}; \node[above] at (3,1) {$\downv{2,4}$}; \node[above left] at (4.1,1) {$\upv{2,5}$}; \node[above] at (5,1) {$\downv{2,5}$}; \node[above] at (-4,0) {$\downv{3,1}$}; \node[above left] at (-2.9,0) {$\upv{3,1}$}; \node[above] at (-2,0) {$\downv{3,2}$}; \node[above left] at (-0.9,0) {$\upv{3,2}$}; \node[above] at (0,0) {$\downv{3,3}$}; \node[above left] at (1.1,0) {$\upv{3,3}$}; \node[above] at (2,0) {$\downv{3,4}$}; \node[above left] at (3.1,0) {$\upv{3,4}$}; \node[above] at (4,0) {$\downv{3,5}$}; \node[above left] at (5.1,0) {$\upv{3,5}$}; \node[above left] at (-3.9,-1) {$\upv{4,1}$}; \node[above] at (-3,-1) {$\downv{4,1}$}; \node[above left] at (-1.9,-1) {$\upv{4,2}$}; \node[above] at (-1,-1) {$\downv{4,2}$}; \node[above left] at (0.1,-1) {$\upv{4,3}$}; \node[above] at (1,-1) {$\downv{4,3}$}; \node[above left] at (2.1,-1) {$\upv{4,4}$}; \node[above] at (3,-1) {$\downv{4,4}$}; \node[above left] at (4.1,-1) {$\upv{4,5}$}; \node[above] at (5,-1) {$\downv{4,5}$}; \node[above left] at (-2.9,-2) {$\upv{5,1}$}; \node[above] at (-2,-2) {$\downv{5,2}$}; \node[above left] at (-0.9,-2) {$\upv{5,2}$}; \node[above] at (0,-2) {$\downv{5,3}$};\node[above left] at (1.1,-2) {$\upv{5,3}$}; \node[above] at (2,-2) {$\downv{5,4}$}; \node[above left] at (3.1,-2) {$\upv{5,4}$}; \node[above] at (4,-2) {$\downv{5,5}$}; \node[above left] at (5.1,-2) {$\upv{5,5}$};
\end{tikzpicture}
\end{center}
\caption{A wall of height~$4$ (obtained from a grid with $n=5$ and $m=5$) and the labelling of its vertices per Lemma~\ref{l-wall}. Note that $v^{n,1}$ and $v^{1,m}$ are exceptional.} \label{fig:walls3}
\end{figure}

A \emph{wall} is a graph which can be thought of as a hexagonal grid. See \figurename~\ref{fig:walls1}, \ref{fig:walls2}, and~\ref{fig:walls3} for three examples of walls of different {\it heights} and {\it widths}.
We refer to~\cite{Ch15} for a formal definition.

From a wall of height~$h$ we obtain a {\it net-wall} by doing the following: for each 
wall vertex~$u$ with three neighbours $v_1$, $v_2$, $v_3$,
replace $u$ and its incident edges with three new vertices $u_1$, $u_2$, $u_3$ and edges $u_1v_1$, $u_2v_2$, $u_3v_3$, $u_1u_2$, $u_1u_3$, $u_2u_3$. We call this a \emph{wye-net transformation}, reminiscent of the well-known wye-delta transformation (see~\cite{Tr89}).
Note that a net-wall is $K_{1,3}$-free but contains an induced {\it net}, which is the graph obtained from a triangle on vertices $a_1,a_2,a_3$ and three new vertices $b_1,b_2,b_3$ by adding the edge $a_ib_i$ for $i=1,2,3$.
We have two results related to these classes.

\begin{lemma}\label{l-wall}
For every $r\geq 0$, {\sc Edge Steiner Tree} is \NP-complete for $r$-subdivisions of walls.
\end{lemma}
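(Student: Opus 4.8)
The plan is to reduce from unweighted \textsc{Edge Steiner Tree} on grid graphs, which is \NP-complete by Theorem~\ref{t-grid-est}. Since \textsc{Edge Steiner Tree} is readily seen to be in \NP\ (guess a subtree and check that it spans $U$ and has small enough weight), it suffices to prove \NP-hardness. Moreover, it is enough to establish hardness for the case $r=0$, that is, for walls themselves: once \textsc{Edge Steiner Tree} is shown to be \NP-complete for the class $\mathcal{C}$ of all walls, Proposition~\ref{p-hardsupport} immediately yields \NP-completeness for the class of $r$-subdivisions of walls for every $r\geq 0$. So the whole task reduces to a polynomial-time reduction from grids to walls.

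The idea is to realize a grid as the \emph{link-contraction} of a wall. Given an instance $(G,U,k)$ of unweighted grid \textsc{Edge Steiner Tree} with $G$ an $n\times m$ grid, I build the wall $W$ of height $n-1$ obtained by \emph{expanding} each grid vertex $v^{i,j}$ into an edge, exactly as recorded by the labelling in Figures~\ref{fig:walls2} and~\ref{fig:walls3}: each $v^{i,j}$ becomes two adjacent wall vertices $\upv{i,j}$ and $\downv{i,j}$ joined by a \emph{link} edge, each horizontal grid edge $v^{i,j}v^{i,j+1}$ becomes the horizontal wall edge $\downv{i,j}\upv{i,j+1}$, and each vertical grid edge $v^{i,j}v^{i+1,j}$ becomes the rung $\downv{i,j}\upv{i+1,j}$ (the few corner vertices flagged as exceptional in the figures remain single vertices). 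The crucial structural fact, which I would verify directly from this description, is that contracting all link edges of $W$ returns exactly $G$: the at most two neighbours of $\upv{i,j}$ and the at most two neighbours of $\downv{i,j}$ other than the link become precisely the up-to-four grid-neighbours of $v^{i,j}$, while every non-link edge of $W$ corresponds to a distinct grid edge. I then weight $W$ by giving every link edge weight $\varepsilon$ and every other edge weight $1$, where $\varepsilon=1/(nm+1)$, and I declare $\upv{i,j}$ (or the single corner vertex, if exceptional) to be a terminal exactly when $v^{i,j}\in U$. Writing $N$ for the number of links, the claim is that $G$ has a Steiner tree with at most $k$ edges if and only if $W$ has a Steiner tree of weight at most $k+N\varepsilon$.

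For the forward direction I would lift a grid Steiner tree $T$ by taking the wall edges corresponding to the edges of $T$ together with the links of all vertices of $T$; this subgraph is connected and contains every terminal, so any spanning tree of it is a Steiner tree of weight at most $k+N\varepsilon$. For the converse, given a wall Steiner tree $T'$ of weight at most $k+N\varepsilon$, the number $a$ of its non-link (weight-$1$) edges satisfies $a\le k+N\varepsilon<k+1$, since $N\varepsilon<1$, and hence $a\le k$; contracting the links turns $T'$ into a connected subgraph of $G$ with at most $a\le k$ edges that contains all terminals, and therefore contains a Steiner tree for $U$ with at most $k$ edges. This integrality gap between $N\varepsilon<1$ and the integer edge count is exactly what lets the tiny link weights behave like free contractions, which is why I prefer the weighted formulation over trying to count links in an unweighted instance. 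The main obstacle I anticipate is purely bookkeeping: pinning down the wall construction precisely enough --- in particular the offset (brick) pattern of the rungs and the handling of the exceptional corner vertices $v^{1,1}$, $v^{1,m}$ and $v^{n,1}$ of Figures~\ref{fig:walls2} and~\ref{fig:walls3} --- so that contraction genuinely recovers $G$ and $W$ is a bona fide wall for every $n$ and $m$.
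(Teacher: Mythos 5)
Your proposal is correct and follows essentially the same route as the paper: reduce from unweighted \textsc{Edge Steiner Tree} on grid graphs (Theorem~\ref{t-grid-est}) by splitting each grid vertex into a pair joined by a link edge of tiny weight $\varepsilon$ chosen so that all links together cost less than $1$, so that contracting the links recovers the grid, and then invoke Proposition~\ref{p-hardsupport} to pass to $r$-subdivisions. The only real difference is cosmetic: by designating a single terminal per split pair and arguing the converse via contraction of the link edges, you sidestep the paper's ``clean instance''/``neat Steiner tree'' preprocessing, which it introduces precisely to guarantee that both split vertices and the corresponding new edge exist for every terminal and every vertex of the lifted tree.
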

\begin{proof}
We reduce from unweighted {\sc Edge Steiner Tree} on grid graphs, which is \NP-complete by Theorem~\ref{t-grid-est}.
Let $(G,U,k)$ be an instance of unweighted {\sc Edge Steiner Tree} where $G$ is an $n \times m$ grid graph.
By adding a few rows and columns to the side of the grid, we may assume that the two rows and columns forming the boundary grid are free of terminals and that the outer row and column will not be used by any optimal solution to $(G,U,k)$. We call such an instance \emph{clean}. 
We call a Steiner tree \emph{neat} if it avoids the outer row and column of the grid or the outer row and 
column
of the wall we construct in the reduction. 
(The outer row and column of a wall means those vertices corresponding to the outer row and column of the grid from which it is obtained by the splitting of the vertices as explained below.)
Note that, because the instance is clean, asking for a neat Steiner tree is not a restriction.

From $G$, we obtain a graph $W$ as follows. See \figurename~\ref{fig:walls2} and ~\ref{fig:walls3} for examples.
Two vertices of $G$ are exceptional: $v^{n,1}$ is always exceptional, $v^{1,m}$ is exceptional if $n$ is odd, and $v^{1,1}$ is exceptional if $n$ is even.  For every vertex $v^{i,j}$ of $G$ that is not exceptional,~$W$ contains vertices $\upv{i,j}$ and $\downv{i,j}$ that are joined by an edge.  We call these edges \emph{new}.
We also add to~$W$ vertices $\upv{n,1}$, and $\downv{1,m}$ (if $v^{1,m}$ is exceptional) or $\downv{1,1}$ (if $v^{1,1}$ is exceptional).
We add an edge from $\downv{i,j}$ to $\upv{i+1,j}$, for $1 \leq i \leq n-1$, $1 \leq j \leq m$.
For $1 \leq i \leq n$, $1 \leq j \leq m-1$, if $i$ is odd and $n$ is even or if $i$ is even and $n$ is odd, we add an edge from $\downv{i,j}$ to $\upv{i,j+1}$, and, otherwise, we add an edge from $\upv{i,j}$ to~$\downv{i,j+1}$.  The edges that are not new are \emph{original}.

We note that $W$ is a wall obtained from $G$ by splitting each vertex in two (except the exceptional vertices that lie in a corner of the grid),
 and that there is a bijection between the original edges of $W$ and the edges of $G$.  We define an edge weighting $w'$ for $W$ by letting the weight of each original edge be 1 and the weight of each new edge be a rational number $\varepsilon$, where $\varepsilon > 0$ is chosen so that the sum of the weights of all new edges is less than 1.
We define a set of terminals $U'$ for $W$: if $v^{i,j}$ is in $U$, then $U'$ contains each of $\downv{i,j}$ and $\upv{i,j}$ (note that both vertices exist because the instance is clean). Observe that $W$ is also clean.

We claim that there is a neat Steiner tree of $k$ edges in $G$ for terminal set $U$ if and only if there is a neat Steiner tree of weight $k+(k+1)\varepsilon$ in $(W,w')$ for terminal set $U'$.  
Indeed, any neat Steiner tree $T$ in $G$ for terminal set $U$ of $k$ edges corresponds naturally to a neat Steiner tree $T'$ for $U'$ in $(W,w')$ of weight 
$k+(k+1)\varepsilon$
by adding for each vertex of $T$ the corresponding new edge to $T'$ (note that the neatness of $T$ implies that the corresponding new edges exist). 
Conversely, any neat Steiner tree $T'$ for $U'$ in $(W,w')$ of weight  
$k+(k+1)\varepsilon$
corresponds naturally to a neat Steiner tree $T$ for $U$ in $G$ of $k$ edges by removing all new edges from $T'$.
Effectively, this mimics the splitting and contraction operations which can be seen as the way in which we obtain $W$ from $G$ and vice versa.

The lemma now follows immediately from Proposition~\ref{p-hardsupport}.\qed
\end{proof}

The next lemma has a similar proof.

\begin{lemma}
\label{l-netwall}
For every $r\geq 0$, {\sc Edge Steiner Tree} is \NP-complete for $r$-subdivisions of net-walls.
\end{lemma}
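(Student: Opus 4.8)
The plan is to follow the template of Lemma~\ref{l-wall}: first prove that {\sc Edge Steiner Tree} is \NP-complete for net-walls themselves (the case $r=0$), and then obtain the result for all $r$-subdivisions of net-walls as an immediate consequence of Proposition~\ref{p-hardsupport}. For the base case I would reduce from {\sc Edge Steiner Tree} on walls, which is \NP-complete by Lemma~\ref{l-wall} (taking $r=0$). The wye-net transformation is exactly what turns a wall into a net-wall, so the heart of the argument is to show that this transformation, together with its inverse (contracting each triangular \emph{net} back to a single vertex), induces a weight-preserving correspondence between Steiner trees, in the same spirit in which vertex-splitting and contraction did in Lemma~\ref{l-wall}.

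For the construction, given an instance $(W,w,U,k)$ of {\sc Edge Steiner Tree} on a wall $W$, let $N$ be the net-wall obtained from $W$ by replacing every degree-$3$ vertex $u$ (with neighbours $v_1,v_2,v_3$) by a triangle on new vertices $u_1,u_2,u_3$, where $u_i$ inherits the edge towards the $v_i$-side. Each edge of $W$ then corresponds to a unique \emph{connecting} edge of $N$, and I give such an edge the weight it had in $W$. Each of the three \emph{internal} triangle edges of a net receives the same small weight $\varepsilon>0$; writing $M$ for the total number of internal edges of $N$, I choose $\varepsilon$ small enough that $M\varepsilon$ is less than the minimum positive difference between any two values in the finite set consisting of all achievable Steiner-tree weights of $(W,w)$ and the number $k$. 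Because the instance is clean, every terminal of $W$ is an interior, degree-$3$ vertex; for each such terminal $v$ I declare all three copies $v_1,v_2,v_3$ to be terminals of $N$. Finally I set the target to $k'=k+M\varepsilon$.

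For the correspondence, in the forward direction I build from a neat Steiner tree $T$ of $W$ a tree $T'$ in $N$ by taking, for each edge of $T$, its connecting edge and, inside each net whose vertex lies on $T$, a subtree of the triangle (using at most two internal edges) that links the copies touched by these connecting edges and all copies that are terminals. Since a triangle can join any subset of its three vertices using at most two of its edges without creating a cycle, $T'$ is a tree spanning $U'$ whose weight is $w(T)$ plus at most $M\varepsilon$. Conversely, I would take a minimum Steiner tree $T'$ of $N$ and first argue, by a standard exchange argument (an internal edge costs only $\varepsilon$, whereas any detour leaving a net uses at least one full-weight connecting edge), that within each net the copies used by $T'$ are joined through that net's internal edges; contracting every net then yields a Steiner tree $T$ of $W$ spanning $U$ of weight $w'(T')-(\#\text{internal edges of }T')\varepsilon$. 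The choice of $\varepsilon$ guarantees that $N$ admits a Steiner tree of weight at most $k'$ if and only if $W$ admits one of weight at most $k$, and Proposition~\ref{p-hardsupport} then lifts \NP-completeness from net-walls to their $r$-subdivisions.

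I expect the book-keeping of the internal ($\varepsilon$-weight) edges to be the delicate point. Unlike Lemma~\ref{l-wall}, where each tree vertex contributes exactly one new edge so that the extra weight is the fixed quantity $(k+1)\varepsilon$, here the number of internal edges used at a net depends on whether its vertex is a branch vertex, a through-vertex, or a terminal of the tree, so there is no single clean target weight. The remedy is precisely the scaling above: making the total internal weight smaller than the smallest relevant gap between distinct wall-weights ensures that the $\varepsilon$-terms can never push a solution across the threshold, so the decision is governed entirely by the integer-like wall part. The second, more structural point to check carefully is that a minimum $T'$ genuinely connects the co-net copies through internal edges, so that contracting each net produces a bona fide tree rather than a disconnected graph or one with a contracted cycle; this is exactly where the exchange argument is needed.
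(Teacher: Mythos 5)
Your proposal is correct and follows essentially the same route as the paper: a reduction from {\sc Edge Steiner Tree} on walls via the wye-net transformation, original weights on the inherited edges, a tiny weight $\varepsilon$ on the triangle edges, all three copies of each (interior, hence degree-$3$) terminal declared terminals, and Proposition~\ref{p-hardsupport} to lift the hardness to $r$-subdivisions; your threshold-style bookkeeping ($k'=k+M\varepsilon$ with $\varepsilon$ below the weight granularity) simply replaces the paper's exact target $k+2(k+1)\varepsilon$, and your explicit exchange argument for the converse direction is, if anything, spelled out more carefully than in the paper. The one point to tighten is that the ``minimum positive gap between achievable Steiner-tree weights and $k$'' is not something you can enumerate in polynomial time; instead choose $\varepsilon$ so that $M\varepsilon$ is below the reciprocal of a common denominator of all edge weights and of $k$, which lower-bounds that gap and is computable with polynomial bit-length.
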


\begin{proof}
We reduce from 
 {\sc Edge Steiner Tree} on walls, which is \NP-complete by Lemma~\ref{l-wall}. Consider an instance $(W,w,U,k)$ of {\sc Edge Steiner Tree} where~$W$ is a wall. 
 By the construction of Lemma~\ref{l-wall}, we may assume that the two outer rows and 
 columns of the wall are free of terminals and the outer row and 
 column will not be used by any optimal solution to $(W,w,U,k)$. We call such an instance \emph{clean}. We call a Steiner tree \emph{neat} if it avoids the outer row and 
 column. Note that, because the instance is clean, asking for a neat Steiner tree is not a restriction.
 We now apply a wye-net transformation to $W$ to obtain a {\it net-wall}~$N$.

The edges of the triangles added through the wye-net transformations are called \emph{new} and all other edges of $N$ are \emph{original} (since they admit a bijection to the edges of $W$).
We create an edge weighting $w'$ on $N$. For each original edge $e$, $w'(e)=w(e)$.  
Let $s$ be the smallest weight of an edge of $(W,w)$.
Let the weight of each new edge be a rational number $\varepsilon$, where $\varepsilon > 0$ is chosen so that the sum of the weights of all new edges is less than $s$.
We define a set of terminals $U'$ for $N$.
We note that each vertex $v$ of $W$ that is in $U$ corresponds to 
a set of three vertices of $N$ that form a triangle,
by the fact that the instance is clean.
If $v$ is in $U$, then
we add to $U'$ 
all three vertices of the triangle.
Note that $N$ is also clean.

We claim that there is a neat Steiner tree of weight $k$ in $(W,w)$ for terminal set $U$  if and only if there is a neat Steiner tree of weight 
$k+2(k+1)\varepsilon$
in $(N,w')$ for terminal set $U'$.
Indeed, any neat Steiner tree $T$ in $W$ for terminal set $U$ of weight at most $k$ corresponds naturally to a neat Steiner tree $T'$ for $U'$ in $(N,w')$ of weight 
$k+2(k+1)\varepsilon$
adding for each vertex of the $k+1$ vertices of $T$ any two edges of the corresponding triangle to $T'$ (note that the neatness of $T$ implies that the corresponding triangles exist).
Conversely, any neat Steiner tree $T'$ for $U'$ in $(N,w')$ of weight 
$k+2(k+1)\varepsilon$
corresponds naturally to a neat Steiner tree $T$ for $U$ in $(W,w)$ of weight at most $k$ by removing all new edges from $T'$.
 
The lemma now follows immediately from Proposition~\ref{p-hardsupport}.
\qed\end{proof}

\subsection{Treewidth and Implications}\label{s-treewidth}

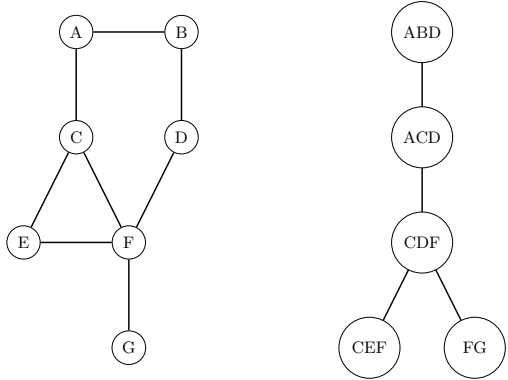
\begin{figure}[tb]
\begin{center} 
\begin{tabular}{ccc}
\begin{minipage}{0.25\textwidth}
\centering
\scalebox{0.7}{
{\begin{tikzpicture}[scale=1,rotate=0]
\GraphInit[vstyle=Normal]
\Vertex[x=1,y=6]{A}
\Vertex[x=3,y=6]{B}
\Vertex[x=1,y=4]{C}
\Vertex[x=3,y=4]{D}
\Vertex[x=0,y=2]{E}
\Vertex[x=2,y=2]{F}
\Vertex[x=2,y=0]{G}

\Edges(A,B,D,F,E,C,A)
\Edges(C,F)
\Edges(F,G)
\end{tikzpicture}}}
\end{minipage}
&
\hspace{1cm}
&
\begin{minipage}{0.25\textwidth}
\centering
\scalebox{0.7}{
{\begin{tikzpicture}[scale=1,rotate=0]
\GraphInit[vstyle=Normal]
\SetVertexNormal[MinSize=33pt]
\Vertex[x=1,y=6]{ABD}
\Vertex[x=1,y=4]{ACD}
\Vertex[x=1,y=2]{CDF}
\Vertex[x=0,y=0]{CEF}
\Vertex[x=2,y=0]{FG}

\Edges(ABD,ACD,CDF,CEF)
\Edges(CDF,FG)
\end{tikzpicture}}}
\end{minipage}\\
\end{tabular}
\end{center}
\caption{A graph, and a tree decomposition of the graph with width~$2$.}
\label{fig:tw}
\end{figure}

A \emph{tree decomposition} of a graph~$G=(V,E)$ is a tree~$T$ whose vertices, which are called {\it nodes}, are subsets of~$V$ and has the following properties: for each~$v\in V$, the nodes of $T$ that contain~$v$ induce a connected subgraph with at least one node, and, for each edge~$vw\in E$, there is at least one node of $T$ that contains~$v$ and~$w$. We refer to \figurename~\ref{fig:tw} for an example.

The sets of vertices of $G$ that form the nodes of $T$ are called \emph{bags}.
The {\it width} of $T$ is one less than the size of its largest bag. 
The \emph{treewidth} of~$G$ is the minimum width of its tree decompositions.
A graph class ${\cal G}$ has {\it bounded treewidth} if there exists a constant~$c$ such that each graph in ${\cal G}$ has treewidth at most~$c$;
otherwise ${\cal G}$ has {\it unbounded treewidth}.
As trees with at least one edge
form exactly the class of connected graphs with treewidth~$1$,
the treewidth of a graph can be seen as a measure that indicates how close a graph is to being a tree. 

We recall the following well-known result.

\begin{lemma}[\cite{BCKN15,CMZ12}]\label{l-treewidth}
{\sc Edge Steiner Tree} can be solved in polynomial time on every graph class of bounded treewidth.
\end{lemma}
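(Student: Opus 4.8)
The plan is to give a bottom-up dynamic programming algorithm over a tree decomposition of bounded width. First I would invoke the (well-known) fact that, for any fixed constant~$c$, one can decide in polynomial -- indeed linear -- time whether a given graph has treewidth at most~$c$ and, if so, produce a tree decomposition of width at most~$c$; a constant-factor approximate decomposition would equally suffice. I would then convert this into a \emph{nice} tree decomposition, which has only four kinds of nodes -- leaf, introduce, forget, and join -- while keeping the width bounded and the number of nodes linear in~$|V(G)|$.

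The heart of the proof is the design of the table. For a node~$t$ with bag~$B_t$, let $G_t$ be the subgraph of $G$ induced by all vertices appearing in the subtree rooted at~$t$. A partial solution at~$t$ is a subforest of~$G_t$ that contains every terminal lying in~$G_t$ and in which every component meets~$B_t$. I would index the table entries by a pair $(S,\mathcal P)$, where $S\subseteq B_t$ records which bag vertices the partial solution uses and $\mathcal P$ is a partition of~$S$ recording which of these vertices currently lie in a common component of the forest; the stored value is the minimum total edge weight of such a forest realizing exactly $(S,\mathcal P)$. Since $|B_t|\le c+1$, there are at most $2^{c+1}$ choices for~$S$ and at most $(c+1)^{c+1}$ choices for~$\mathcal P$, so each table has size bounded by a function of~$c$ alone.

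Next I would specify the transitions. At a leaf the table is immediate. At an \emph{introduce} node one either omits the new vertex or adds it to~$S$, charging the weights of its edges to the rest of~$S$ and refining~$\mathcal P$ accordingly. At a \emph{forget} node one projects out the forgotten vertex~$v$: an entry is valid only if dropping~$v$ does not leave a component without an anchor in the new bag, i.e.\ $v$ may be forgotten only when its class contains another bag vertex -- this connectivity bookkeeping is exactly what makes {\sc Edge Steiner Tree} subtler than a local problem such as {\sc Vertex Cover}. At a \emph{join} node one merges the two child partitions by taking the transitive closure of their union and adds the two weights, discarding any entry in which the merge creates a cycle. Finally, at the (empty-bag) root the answer is the least stored value over entries whose forest forms a single component containing all terminals, which we compare against~$k$.

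The step I expect to be the main obstacle is the join node together with the forget rule. One must show that the values $c[t,S,\mathcal P]$ compose correctly under these operations: that forbidding cycle-creating merges never discards an optimal solution, that a vertex is never forgotten while still needed for connectivity, and, crucially, that every global minimum Steiner tree decomposes into partial solutions consistent with the table semantics and conversely. Establishing this exchange/optimality argument is the technical core; once it is in place, the running time is $|V(G)|$ times a function of~$c$, which is polynomial (indeed linear for each fixed~$c$), proving the lemma.
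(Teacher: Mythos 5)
The paper does not prove this lemma at all: it is quoted as a known result with citations to Chimani--Mutzel--Zey and Bodlaender--Cygan--Kratsch--Nederlof, so there is no in-paper argument to compare against. Your sketch is essentially the classical dynamic program that underlies the first of those references (states indexed by a subset $S$ of the bag together with a partition of $S$ into connectivity classes), and it is the right approach; the cited single-exponential improvements (rank-based reduction of the partition tables, Cut\,\&\,Count) are refinements you do not need for mere polynomiality at fixed width. Your table size bound, the forget-node anchoring condition, and the root condition are all correct.

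Two details deserve attention if you were to write this out in full. First, at a join node the two children share the same bag, so if edges among bag vertices are admitted at introduce nodes (as in your formulation), a partial solution on each side may contain the same intra-bag edge and naively summing the two table values double-counts its weight; the standard fixes are either to use introduce-edge nodes so each edge is accounted for exactly once along the decomposition, or to insist that one designated child contributes no intra-bag edges. Second, discarding cycle-creating merges is sound here only because edge weights are strictly positive (as they are in this paper, with weights in $\mathbb{Q}^+$), so some optimal solution is acyclic and survives the restriction; with zero-weight edges one would instead argue that any optimal connected solution can be pruned to a tree of no greater weight. Neither point is a gap in the idea, but both belong in the ``technical core'' you defer.
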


For some of our poofs we need the well-known Robertson-Seymour Grid-Minor Theorem (also called the Excluded Grid Theorem), which can be formulated for walls.

\begin{theorem}[\cite{RS86}]\label{l-grid}
For every integer~$h$, there exists a constant $c_h$ such that a graph has treewidth at least $c_h$ if and only if it
contains a wall of height $h$ as a subdivision.
\end{theorem}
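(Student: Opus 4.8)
This statement is a reformulation, for walls, of the Robertson--Seymour Grid-Minor Theorem (the Excluded Grid Theorem), so my plan is to establish its two implications separately and then fix the constant $c_h$ accordingly. A preliminary remark makes the two notions of containment interchangeable: a wall has maximum degree $3$, and for any subcubic graph $H$ a graph contains $H$ as a minor if and only if it contains $H$ as a topological minor, i.e.\ as a subdivision. Moreover a wall of height $h$ contains, and is contained in, grids whose side length is $\Theta(h)$, as minors. Hence throughout I may freely replace \emph{contains a wall of height $h$ as a subdivision} by \emph{contains a wall of height $h$ as a minor}, and, up to a constant factor in $h$, by \emph{contains an $h\times h$ grid as a minor}.

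The direction stating that a wall forces large treewidth is the routine one. Treewidth is minor-monotone, so if $G$ contains a wall of height $h$ (as a subdivision, hence as a minor) then $\mathrm{tw}(G)\ge \mathrm{tw}(W_h)$, where $W_h$ denotes the wall of height $h$. It is standard that the $k\times k$ grid has treewidth exactly $k$, and since $W_h$ contains a grid of height $\Theta(h)$ as a minor we get $\mathrm{tw}(W_h)=\Theta(h)$. Thus containing a wall of height $h$ guarantees treewidth at least some explicit increasing function of $h$, which handles this implication once $c_h$ is chosen no larger than $\mathrm{tw}(W_h)$.

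The substantial direction is that large treewidth forces a wall of height $h$, and this is exactly the content of \cite{RS86}; it is the main obstacle, with essentially all the difficulty concentrated here. The plan is first to convert the hypothesis of large treewidth into a concrete highly connected obstruction, using the duality between treewidth and brambles/tangles: a graph of treewidth at least $t$ carries a bramble, equivalently a tangle, of order linear in $t$. One then exploits this well-linked structure to route a large family of vertex-disjoint paths that pairwise cross in a grid-like pattern, and cleans this crossbar up into a genuine grid (hence wall) minor. In the original argument the extraction proceeds through Robertson and Seymour's tangle machinery and a long sequence of reductions; more recent and quantitatively efficient treatments (Robertson--Seymour--Thomas, and Chekuri--Chuzhoy, who obtained a polynomial bound on the threshold function) build a path-of-sets or tree-of-sets system from a well-linked set and then apply routing arguments, via concurrent flows and sparse cuts, to produce the grid.

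Finally I set $c_h$ to be the grid-minor threshold $f(h)$ furnished by \cite{RS86} for walls of height $h$; then $\mathrm{tw}(G)\ge c_h$ forces a wall of height $h$ as a subdivision, which is the direction used for the dichotomies that follow, while the converse records the elementary fact that such a wall already forces $\mathrm{tw}(G)=\Theta(h)$. I would stress in the write-up that the real content of the biconditional is the qualitative equivalence between \emph{unbounded} treewidth of a graph class and the presence of walls of \emph{every} height as subdivisions: the two quantitative bounds (treewidth forced by a given wall, and a wall forced by given treewidth) are tied by functions of $h$ in both directions but need not be literal inverses, and this qualitative form is all that the applications require.
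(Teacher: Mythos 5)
The paper does not prove this statement at all: it is imported verbatim as a known result with the citation to Robertson and Seymour \cite{RS86}, so there is no internal proof to compare yours against. Your outline is a faithful account of how the result is established in the literature: the reduction between walls and grids up to constant factors, the equivalence of minor and topological-minor containment for subcubic patterns (which is what lets one speak of subdivisions rather than minors), the easy direction via minor-monotonicity of treewidth and the fact that the $k\times k$ grid has treewidth $k$, and the deferral of the hard direction to the tangle/bramble machinery of \cite{RS86} (or the later quantitative improvements). Your closing caveat is also well taken and worth recording: as literally phrased, a single constant $c_h$ cannot serve both directions of the biconditional, since the grid-minor threshold $f(h)$ exceeds $\mathrm{tw}(W_h)=\Theta(h)$; the statement should be read as the standard qualitative equivalence (a class has unbounded treewidth if and only if it contains walls of every height as subdivisions), and an inspection of Lemmas~\ref{l-sw} and~\ref{l-sw2} and Theorem~\ref{t-un} confirms that only this qualitative form is ever invoked. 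In short: nothing in your proposal conflicts with the paper, because the paper treats the theorem as a black box; your sketch is a correct high-level justification of that black box, though of course a self-contained proof of the hard direction is far beyond what either you or the paper could reasonably include.
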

 
We will use two lemmas, both of which follow immediately from Theorem~\ref{l-grid}. 

\begin{lemma}\label{l-sw}
For every $r\geq 0$, the class of $r$-subdivisions of walls has unbounded treewidth.
\end{lemma}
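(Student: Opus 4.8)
The plan is to read the result off Theorem~\ref{l-grid} applied to walls of growing height. Fix $r\ge 0$. For each integer $h\ge 1$, let $W$ be a wall of height $h$ and let $W^{r}$ be its $r$-subdivision. The key observation is that $W^{r}$ is obtained from $W$ solely by subdividing edges, so $W^{r}$ is itself a subdivision of $W$; taking the witnessing subgraph to be all of $W^{r}$, the graph $W^{r}$ contains a wall of height $h$ as a subdivision. Hence, by the implication of Theorem~\ref{l-grid} stating that any graph containing a wall of height $h$ as a subdivision has treewidth at least $c_h$, the treewidth of $W^{r}$ is at least $c_h$. Since $W^{r}$ belongs to the class of $r$-subdivisions of walls, it suffices to show that the numbers $c_h$ are not bounded above.

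This is the one point that needs a short argument rather than being wholly immediate, and so it is the main (mild) obstacle. I would argue by contradiction: if $c_h\le C$ for some constant $C$ and all $h$, then fix any single graph $G$ of treewidth at least $C$ (for instance $W^{r}$ for some large $h$). For every $h$ we would then have that the treewidth of $G$ is at least $C\ge c_h$, so by the converse implication of Theorem~\ref{l-grid} the graph $G$ would contain a wall of height $h$ as a subdivision for every $h$. A wall of height $h$ has a number of vertices that grows with $h$, and any subdivision of it has at least as many vertices, so a fixed finite graph $G$ cannot contain subdivisions of walls of arbitrarily large height. This contradiction shows that $\{c_h\mid h\ge 1\}$ is unbounded.

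Combining the two parts finishes the proof: given any constant $c$, choose $h$ with $c_h>c$; then the $r$-subdivision of a wall of height $h$ has treewidth greater than $c$. Therefore no constant bounds the treewidth of the graphs in the class, that is, the class of $r$-subdivisions of walls has unbounded treewidth. Everything other than the unboundedness of $c_h$ is a one-line consequence of the Excluded Grid Theorem in its wall formulation, and that remaining point is dispatched by the finiteness argument above.
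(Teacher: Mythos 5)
Your proof is correct and takes essentially the same route as the paper, which gives no separate argument and simply asserts that the lemma follows immediately from Theorem~\ref{l-grid}; you spell out exactly that derivation, including the (genuinely needed) observation that the constants $c_h$ must be unbounded. The only blemish is your parenthetical choice of $G=W^{r}$ for large $h$ as the graph of treewidth at least $C$ --- under your contradiction hypothesis you only know its treewidth is at least $c_h\le C$, so take $K_{C+1}$ (treewidth $C$) instead; with that substitution the argument is complete.
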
 

\begin{lemma}\label{l-sw2}
For every $r\geq 0$, the class of $r$-subdivisions of net-walls has unbounded treewidth.
\end{lemma}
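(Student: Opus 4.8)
The plan is to mirror the (trivial) proof of Lemma~\ref{l-sw} and reduce to Theorem~\ref{l-grid}: I will show that the $r$-subdivision of a net-wall of height~$h$ contains a wall of height~$h$ as a subdivision, after which the treewidth lower bound $c_h$ is immediate, and letting $h\to\infty$ finishes the argument.

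The first and only substantive step is to undo the wye-net transformation at the topological level. Recall that a net-wall $N$ is obtained from a wall $W$ by replacing every degree-$3$ vertex $u$, with neighbours $v_1,v_2,v_3$, by a triangle on new vertices $u_1,u_2,u_3$, where $u_i$ retains the (transformed) edge towards~$v_i$. I claim that $N$ contains a subdivision of $W$. To see this, designate $u_1$ as the branch vertex representing $u$, and realise the three edges of $W$ incident with $u$ by the paths consisting of, respectively, the external edge at $u_1$; the triangle edge $u_1u_2$ followed by the external edge at $u_2$; and the triangle edge $u_1u_3$ followed by the external edge at $u_3$. The triangle edge $u_2u_3$ is discarded. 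Within each such gadget these three paths meet only in the branch vertex $u_1$ and are otherwise internally disjoint, while distinct gadgets share no vertices and each original (cross) edge of $W$ is used by exactly one of these paths. Hence the union of all these paths, taken over the vertices of $W$, is a subgraph of $N$ that is a subdivision of $W$.

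Next I would pass to subdivisions. Since the $r$-subdivision $N_r$ of $N$ is itself a subdivision of $N$, and a subdivision of a subdivision of $W$ is again a subdivision of $W$, the graph $N_r$ also contains a subdivision of $W$. As $W$ is a wall of height $h$, this shows that $N_r$ contains a wall of height $h$ as a subdivision, so by Theorem~\ref{l-grid} we obtain $\mathrm{tw}(N_r)\ge c_h$. Finally, net-walls, and hence their $r$-subdivisions, exist for every height $h$, and the constants $c_h$ of Theorem~\ref{l-grid} are unbounded in $h$: otherwise some $C\ge c_h$ for all $h$ would, by the forward direction of Theorem~\ref{l-grid}, force a single wall of treewidth at least $C$ (which exists by Lemma~\ref{l-sw} with $r=0$) to contain walls of arbitrarily large height as subdivisions, which is impossible for a finite graph. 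Therefore the treewidth of the class of $r$-subdivisions of net-walls is unbounded.

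The only genuinely technical point is the first step, namely checking that routing through exactly two of the three triangle edges at each gadget yields globally internally-disjoint paths and thus a bona fide subdivision of $W$; everything afterwards is the same bookkeeping used for Lemma~\ref{l-sw}.
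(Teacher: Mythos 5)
Your proof is correct and follows exactly the route the paper intends: the paper states that Lemma~\ref{l-sw2} ``follows immediately from Theorem~\ref{l-grid}'', and the one detail it leaves implicit is precisely the observation you verify, namely that undoing each wye-net gadget topologically (routing through two of the three triangle edges) exhibits the original wall as a subdivision inside the ($r$-subdivided) net-wall. Your additional remarks on composing subdivisions and on the unboundedness of the constants $c_h$ are sound and just make explicit the bookkeeping the paper also omits for Lemma~\ref{l-sw}.
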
 

\subsection{The Proof}\label{s-theproof}

We are now ready to prove Theorem~\ref{t-main0}.

\medskip
\noindent
{\bf Theorem~\ref{t-main0} (restated).}
{\it Let $H_1$ and $H_2$ be two graphs. If one of the following cases holds:
\begin{enumerate}
\item $H_1=K_r$ or $H_2=K_r$ for some $r\in \{1,2\}$;
\item $H_1=K_3$ and $H_2=K_{1,3}$, or vice versa;
\item $H_1=K_r$ for some $r\geq 3$ and $H_2=P_3$, or vice versa; or
\item $H_1=K_r$ for some $r\geq 3$ and $H_2=sP_1$ for some $s\geq 1$, or vice versa.\\[-15pt]
\end{enumerate}
then the treewidth of the class of $(H_1,H_2)$-free graphs is bounded and 
{\sc Edge Steiner Tree} is polynomial-time solvable for $(H_1,H_2)$-free graphs. In all other cases,
the treewidth of $(H_1,H_2)$-free graphs is unbounded and {\sc Edge Steiner Tree} is \NP-complete for $(H_1,H_2)$-free graphs.}

\begin{proof}
First suppose that one of the Cases~1--4 holds.
Let $G$ be an $(H_1,H_2)$-free graph.
First suppose that $H_1=K_r$ for some $r\in \{1,2\}$. Then $G$ has no edges and so has treewidth~$0$.
If $H_1=K_3$ and $H_2=K_{1,3}$, then $G$ has maximum degree at most~$2$, that is, $G$ is the disjoint union of paths and cycles. Hence~$G$ has treewidth at most $2$. If $H_1=K_r$ for some $r\geq 3$, and $H_2=P_3$, then $G$ is the disjoint union of complete graphs, each of size at most~$r-1$. 
Hence $G$ has treewidth at most $r-2$.
Finally, suppose that $H_1=K_r$, for some $r\geq 3$, and $H_2=sP_1$, for some $s\geq 1$. 
Ramsey's Theorem tells us that for every $r\geq 1$ and $s\geq 1$, there exists a constant $R(r,s)$ such that every graph on at least $R(r,s)$ vertices contains a clique on $r$ vertices or an independent set on $s$ vertices. As $G$ is $(K_r,sP_1)$-free, we find that the number of vertices of $G$ is bounded by $R(r,s)-1$. Hence, $G$ has treewidth at most $R(r,s)-2$.
In order to complete the proof of the first statement of the theorem we can now apply Lemma~\ref{l-treewidth} and find that {\sc Edge Steiner Tree} is polynomial-time solvable for $(H_1,H_2)$-free graphs.\footnote{Note that the graph under consideration in Cases~1--4 has a very restricted structure: any large connected component (if it exists) is either a path or a cycle. Hence, we can also solve the problem directly instead of applying Lemma~\ref{l-treewidth}.} 

\medskip
\noindent
Now suppose that none of Cases~1--4 apply. We will prove that the treewidth of $(H_1,H_2)$-free graphs is unbounded and that {\sc Edge Steiner Tree} is \NP-complete for the class of $(H_1,H_2)$-free graphs.

First suppose that neither $H_1$ nor $H_2$ is a complete graph. Then the class of $(H_1,H_2)$-free graphs contains the class of all complete graphs. As the treewidth of a complete graph $K_r$ is readily seen to be equal to $r-1$, the class of complete graphs, and thus the class of $(H_1,H_2)$-free graphs, has unbounded treewidth.
Moreover, by Lemma~\ref{l-complete}, {\sc Edge Steiner Tree} is \NP-complete for the class of $(H_1,H_2)$-free graphs.
From now on, assume that $H_1=K_r$ for some $r\geq 1$. As Case~1 does not apply, we find that $r\geq 3$.

Suppose that $H_2$ contains a cycle $C_s$ as an induced subgraph for some $s\geq 3$.
As $H_1=K_r$ for some $r\geq 3$, the class of $(H_1,H_2)$-free graphs contains the class of $(C_3,C_s)$-free graphs. As the latter graph class contains the class of $(s+1)$-subdivisions of walls, which have unbounded treewidth due to Lemma~\ref{l-sw}, the class of $(H_1,H_2)$-free graphs has unbounded treewidth.
Moreover, by Lemma~\ref{l-wall}, {\sc Edge Steiner Tree} is \NP-complete for the class of $(H_1,H_2)$-free graphs.

Note that if $H_2$ contains a cycle as a subgraph, then it also contains a cycle as an induced subgraph.
So now suppose that $H_2$ contains no cycle, that is, $H_2$ is a forest.
First assume that $H_2$ contains an induced 
$P_1+P_2$. Recall that $H_1=K_r$ for some $r\geq 3$. 
Then the class of  $(H_1,H_2)$-free graphs contains the class of complete bipartite graphs. 
As this class has unbounded treewidth, the class of $(H_1,H_2)$-free graphs has unbounded treewidth.
Moreover, by Lemma~\ref{l-cbg}, {\sc Edge Steiner Tree} is \NP-complete for the class of $(H_1,H_2)$-free graphs.
From here on we assume that $H_2$ is a $(P_1+P_2)$-free forest.

Suppose that $H_2$ has a vertex of degree at least~$3$. In other words, as $H_2$ is a forest, the claw $K_{1,3}$ is an induced subgraph of $H_2$.  Recall that $H_1=K_r$ for some $r\geq 3$. 
First assume that $r=3$. As Case~2 does not apply, $H_2$ properly contains an induced $K_{1,3}$.  
As $H_2$ is a $(P_1+P_2)$-free forest, this means that $H_2=K_{1,s}$ for some $s\geq 4$. Then the class of $(H_1,H_2)$-free graphs contains the class of walls. As the latter class has unbounded treewidth due to Lemma~\ref{l-sw}, the class of $(H_1,H_2)$-free graphs has unbounded treewidth. Moreover, by Lemma~\ref{l-wall}, {\sc Edge Steiner Tree} is \NP-complete for the class of $(H_1,H_2)$-free graphs.
Now assume that $r\geq 4$. Then the class of $(H_1,H_2)$-free graphs contains the class of net-walls. As the latter graph class has unbounded treewidth due to Lemma~\ref{l-sw2}, the class of $(H_1,H_2)$-free graphs has unbounded treewidth. Moreover, by Lemma~\ref{l-netwall}, {\sc Edge Steiner Tree} is \NP-complete for the class of $(H_1,H_2)$-free graphs.

From the above we may assume that $H_2$ 
does not contain any vertex of degree~$3$. This means that $H_2$ is a linear forest, that is, a disjoint union of paths.
As Case~4 does not apply, $H_2$ has an edge. 
Every $(P_1+P_2)$-free linear forest with an edge is either a $P_2$ or a $P_3$.  
However,
this is not possible, as Case~1 (with the roles of $H_1$ and $H_2$ reversed) and Case~3 do not apply.
We conclude that this case cannot happen.
\qed
\end{proof}

\section{Proof of Theorem~\ref{t-main}}\label{s-theproof1}

In this section we give a proof of our second dichotomy. We state useful past results in Section~\ref{s-lit} followed by some new results for line graphs and $P_4$-free graphs in Section~\ref{s-new}.
Then, in Section~\ref{s-final}, we show how to combine these results to obtain the proof of Theorem~\ref{t-main}.

\subsection{Known Results}\label{s-lit}

The first result we need is due to Brandst\"{a}dt and M\"uller. A graph is {\it chordal bipartite} if it has no induced cycles of length 3 or of length at least 5; that is, a graph is chordal bipartite if it is $(C_3,C_5,C_6,\ldots)$-free.

\begin{theorem}[\cite{BM87}]\label{t-cb}
The unweighted {\sc Vertex Steiner Tree} problem is \NP-complete for chordal bipartite graphs.
\end{theorem}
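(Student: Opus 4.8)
The plan is first to settle membership in \NP, which is routine: a Steiner tree has at most $|V(G)|$ vertices, so I would guess a vertex subset and check in polynomial time that it contains a tree spanning $U$ of the required size. By Observation~\ref{o-equivalent} it is equivalent to bound either the number of edges or the number of vertices, so I will read the problem as asking for a subtree containing $U$ with as few vertices as possible. For hardness I would reduce from a suitable \NP-complete problem (Brandst\"adt and M\"uller reduce from a covering/satisfiability-type problem) and engineer the constructed graph to be chordal bipartite. Because we are in the \emph{unweighted} setting, the usual device of blocking an unwanted edge by a large weight (as in the proof of Lemma~\ref{l-cbg}) is unavailable; instead I would force a vertex $v$ into every Steiner tree by attaching to it a pendant terminal leaf $\ell_v$, since any tree containing $\ell_v$ must contain its unique neighbour $v$. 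Crucially, attaching degree-$1$ vertices never creates a new induced cycle, so this forcing gadget is compatible with chordal bipartiteness.

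The core of the reduction is to assemble the host graph from \emph{complete bipartite blocks}, which are themselves chordal bipartite (every cycle of length at least $6$ in a $K_{s,t}$ has a chord), glued together along cut vertices and a tree-like backbone. The point of gluing at cut vertices is that any induced cycle of the resulting graph must lie entirely inside a single block, so the whole graph inherits the absence of induced cycles of length at least $6$ and is therefore chordal bipartite. Each gadget would encode a local choice of the source instance (a variable assignment, a chosen set, or an incidence), the backbone would enforce global connectivity, and the pendant terminals would force the ``skeleton'' vertices to be present. I would then set the budget $k$ so that a minimum Steiner tree of at most $k$ vertices exists if and only if the source instance is a yes-instance, proving each direction by translating a solution into a subtree of the prescribed size and back.

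The main obstacle is to maintain chordal bipartiteness and faithful encoding \emph{simultaneously}. A naive incidence-graph reduction from a covering problem fails here, because a bipartite graph is chordal bipartite exactly when its biadjacency matrix is totally balanced, and covering problems on totally balanced matrices are polynomial-time solvable; hence the hardness cannot be hidden in a pure covering step. The difficulty is therefore to retain genuine combinatorial choice while forbidding every induced cycle of length at least $6$, and at the same time to ensure that the many edges inside the complete bipartite blocks cannot be exploited to build a spuriously small tree (the structural analogue of the weighting argument). I expect the delicate part to be the backward direction of correctness, namely arguing that an optimal tree is forced to respect the block-and-backbone decomposition so that it can be decoded into a valid solution of the source problem; verifying afterwards that the final graph is $(C_3,C_5,C_6,\ldots)$-free is then a structural check on the gluing pattern.
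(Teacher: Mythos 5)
First, note that the paper does not prove Theorem~\ref{t-cb} at all: it is imported verbatim from Brandst\"adt and M\"uller~\cite{BM87}, so there is no in-paper argument to compare your sketch against. Judged on its own terms, your proposal is a plan rather than a proof: you never name the source problem, never specify a gadget, and never carry out either direction of the correctness argument. You even identify the central difficulty yourself --- that covering problems on totally balanced matrices are polynomial-time solvable, so the hardness cannot be hidden in a naive incidence construction --- and then leave it unresolved. The \NP-membership argument and the pendant-leaf forcing gadget are fine, but they are the trivial parts.

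More importantly, the architecture you commit to cannot work. If the host graph is assembled from complete bipartite blocks glued only at cut vertices along a tree-like backbone (pendant terminal leaves being $K_2$-blocks), then every $2$-connected block of the graph is complete bipartite and the block-cut tree is a tree. \textsc{Vertex Steiner Tree} decomposes over cut vertices by dynamic programming on the block-cut tree, and inside a single $K_{s,t}$ the local subproblem is trivial: a required set meeting both sides is already connected, and a required set contained in one side needs exactly one extra vertex from the other side. Hence every instance of the shape you describe is solvable in polynomial time, so no reduction of that shape can establish \NP-hardness --- this is the structural analogue of your own observation that the weighting trick of Lemma~\ref{l-cbg} is unavailable in the unweighted setting. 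A correct construction (such as the one in~\cite{BM87}) must place the combinatorial choice inside $2$-connected pieces that are chordal bipartite without being complete bipartite, which is precisely the part your sketch leaves blank.
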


The second result that we need is due to Farber, Pulleyblank and White. A graph is {\it split} if its vertex set can be partitioned into a clique and an independent set. It is well known that the class of split graphs coincides with the class of $(2P_2,C_4,C_5)$-free graphs~\cite{FH77}.

\begin{theorem}[\cite{FPW85}]\label{t-split}
The unweighted {\sc Vertex Steiner Tree} problem is \NP-complete for split graphs.
\end{theorem}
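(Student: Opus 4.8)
The plan is to establish \NP-membership easily and then prove \NP-hardness by a reduction from a covering problem, exploiting the clique/independent-set structure of split graphs. Membership in \NP\ is routine: a certificate is a subset of vertices, and we can verify in polynomial time that it induces a connected subgraph containing~$U$ (which then contains a spanning tree on that vertex set) of size at most~$k$; since an unweighted Steiner tree is measured by its number of vertices, this suffices.

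For hardness I would reduce from \textsc{Set Cover} (equivalently \textsc{Dominating Set} or \textsc{Exact Cover by 3-Sets}), which is \NP-complete. Given a universe $X=\{x_1,\dots,x_n\}$, a family $\mathcal{S}=\{S_1,\dots,S_m\}$ of subsets of~$X$ in which every element lies in at least one set, and an integer~$\ell$, I would construct a split graph~$G$ as follows: let the independent set be $I=\{x_1,\dots,x_n\}$ and the clique be $C=\{s_1,\dots,s_m\}$ (adding all edges inside~$C$), and put an edge $x_is_j$ exactly when $x_i\in S_j$. Take $U=I$ as the terminal set and set $k=n+\ell$. Since each element lies in some set, every $x_i$ has a neighbour in~$C$, so~$G$ is connected; and by construction~$G$ is split.

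The key structural observation drives correctness. Because the terminals $U=I$ form an independent set, any tree connecting two distinct terminals must route through~$C$, so the non-terminal (Steiner) vertices of any Steiner tree for~$U$ lie in~$C$, and each terminal must be adjacent to at least one chosen clique vertex. Conversely, if a set $S'\subseteq C$ dominates all of~$I$, then $I\cup S'$ is connected (as~$C$ is a clique), and any spanning tree of it is a Steiner tree on exactly $n+|S'|$ vertices. Thus the minimum Steiner tree for~$U$ has exactly $n+c^*$ vertices, where $c^*$ is the minimum number of clique vertices dominating~$I$; and a set of clique vertices dominates~$I$ precisely when the corresponding subfamily of~$\mathcal{S}$ covers~$X$. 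Hence $G$ has a Steiner tree with at most $k=n+\ell$ vertices if and only if $\mathcal{S}$ has a subfamily of size at most~$\ell$ covering~$X$, completing the reduction.

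The main obstacle is making the domination-to-covering equivalence precise and tight: one must argue that no Steiner tree can beat a minimum covering subfamily, i.e.\ that the independent terminals genuinely force clique vertices and that the clique-connectivity of~$C$ never lets us save below the set-cover optimum. Once this equivalence is verified, the remainder is bookkeeping. One could alternatively phrase the whole argument for the unweighted \textsc{Edge Steiner Tree} problem and invoke Observation~\ref{o-equivalent}, but the vertex count is the natural quantity here.
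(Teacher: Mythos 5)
Your reduction is correct: the paper itself offers no proof of this statement but simply cites Farber, Pulleyblank and White~\cite{FPW85}, and your \textsc{Set Cover} construction (independent set of universe elements as terminals, clique of set-vertices, element--set incidence edges) is essentially the classical argument behind that citation, with the domination-to-covering equivalence holding exactly as you describe. The only loose ends are trivial normalisations --- assume $|X|\geq 2$ and $\ell\geq 1$ so that every Steiner tree for $U=I$ is forced to use at least one clique vertex and the chosen clique vertices always induce a connected set --- after which the forward and backward directions go through verbatim.
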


\subsection{New Results}\label{s-new}

We start with the following lemma.

\begin{lemma}
\label{l-line}
The unweighted {\sc Vertex Steiner Tree} problem is \NP-complete for line graphs.
\end{lemma}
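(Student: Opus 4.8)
The plan is to establish membership in \NP, which is immediate since a Steiner tree serves as a polynomial-size certificate, and then to prove \NP-hardness by reducing from unweighted {\sc Edge Steiner Tree} on arbitrary connected graphs, which is \NP-complete by~\cite{Ka72}. The guiding idea is the classical correspondence between a graph and its line graph: the vertices of $L(G)$ are the edges of $G$, and a set $S$ of vertices of $L(G)$ induces a connected subgraph precisely when the corresponding set of edges of $G$ forms a connected subgraph of $G$. Since an unweighted vertex Steiner tree of minimum weight is nothing more than a connected vertex set containing all terminals of minimum cardinality (take a spanning tree of the connected subgraph it induces), minimizing the number of vertices of a vertex Steiner tree in $L(G)$ amounts to minimizing the number of edges of a connected edge set in $G$. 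The only difficulty is that the terminals of the source instance are \emph{vertices} of $G$, whereas terminals in the line graph correspond to \emph{edges} of $G$; resolving this mismatch is the crux of the construction.

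To handle the encoding I would use a pendant-edge gadget. Given an instance $(G,U,k)$ of unweighted {\sc Edge Steiner Tree}, form $G^+$ by adding, for each terminal $v\in U$, a new degree-one vertex $v'$ together with the pendant edge $e_v=vv'$. Take $H=L(G^+)$, which is a line graph by construction; let the terminal set $U^*$ consist of the vertices of $H$ corresponding to the edges $e_v$ ($v\in U$), and set the new bound to $k^*=k+|U|$. I would then show that $(G,U,k)$ is a yes-instance of unweighted {\sc Edge Steiner Tree} if and only if $H$ has a vertex Steiner tree for $U^*$ on at most $k^*$ vertices, using the connected-edge-set correspondence: a vertex Steiner tree on $t$ vertices in $H$ is exactly a connected edge set of size $t$ in $G^+$ that contains all the $e_v$.

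The forward direction is routine: if $T$ is a Steiner tree for $U$ in $G$ with at most $k$ edges, then $E(T)\cup\{e_v:v\in U\}$ is a connected edge set of $G^+$ of size at most $k+|U|$ containing every $e_v$, and so corresponds to a connected vertex set of $H$ containing $U^*$ whose spanning tree is the required vertex Steiner tree. The hard part, and the step I would treat carefully, is the backward direction, where I must rule out that a small vertex Steiner tree in $H$ ``cheats'' by exploiting the gadget. Here the key observation is that each $e_v$ is a bridge of $G^+$ whose only role is to attach the leaf $v'$. Starting from a connected edge set $F$ of $G^+$ of size at most $k^*$ that contains all $e_v$, deleting the $|U|$ pendant edges removes only the leaves $v'$ and leaves a connected edge set on at most $k$ edges still spanning all of $U$ (assuming, as we may, that $|U|\geq 2$, so that connectivity forces each terminal $v$ to meet a non-pendant edge of $F$); a spanning tree of this set is a Steiner tree for $U$ in $G$ with at most $k$ edges. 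The main obstacle throughout is thus the faithful translation between the vertex-terminals of the original problem and the edge-vertices of the line graph, which the pendant gadget together with the leaf/bridge argument handles cleanly.
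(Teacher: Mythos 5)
Your proposal is correct and follows essentially the same route as the paper's proof: attach a pendant edge to each terminal, take the line graph, declare the pendant edges (as vertices of the line graph) to be the new terminals with budget $k+|U|$, and use the standard correspondence between connected vertex sets of $L(G^+)$ and connected edge sets of $G^+$. The only cosmetic difference is the source of hardness for unweighted \textsc{Edge Steiner Tree} (Karp's reduction versus the grid-graph result), which does not affect the argument.
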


\begin{proof}
By Theorem~\ref{t-grid-est}, unweighted {\sc Edge Steiner Tree} is \NP-complete. 
Let $(G,U,k)$ be an instance of this problem.  From $G$ we construct a new graph $G'$ by introducing a new vertex $v_u$ for each terminal $u\in U$, which we make only adjacent to~$u$. We let $U'$ consist of all these new vertices.
We observe that $G'$ has a Steiner tree $T'$ for $U'$ with at most $k+|U|$ edges if and only if $G$ has a Steiner tree $T$ for $U$ with at most  $k$ edges. 

We now consider the line graph $L(G')$ of $G'$ with set of terminals $U^*=\{uv_u\ |\; u\in U\}$; this is a set of edges in $G'$ and a set of vertices in $L(G')$.  To complete the proof, we show that  $G'$ has a Steiner tree for $U'$ on, say, $\ell$ edges if and only if $L(G')$ has a Steiner tree for $U^*$ on $\ell$ vertices. We first note that the edge set $E'$ of a Steiner tree for $U'$ of $G'$ must contain the set $U^*$. Further, $E'$, considered as a set of vertices of $L(G')$, induces a connected subgraph and has $|E'|=\ell$ vertices. Conversely, if there is a Steiner tree for $U^*$ in $L(G')$ on $\ell$ vertices, then these vertices, considered as edges in $G'$, form a Steiner tree for $U'$ in $G'$.\qed
\end{proof}

Recall that a subgraph $G'$ of a graph $G$ is spanning if $V(G')=V(G)$.
Let $G_1$ and $G_2$ be two graphs.
The {\it join} operation adds an edge between every vertex of $G_1$ and every vertex of $G_2$. The {\it disjoint union} operation
takes the disjoint union of $G_1$ and $G_2$.
A graph $G$ is a {\it cograph} if $G$ can be generated from $K_1$ by a sequence of join and disjoint union operations.
A graph is a cograph if and only if it is $P_4$-free~\cite{CLS81}.
This implies the following well-known lemma.  

\begin{lemma}\label{l-p4}
Every connected $P_4$-free graph on at least two vertices has a spanning complete bipartite subgraph.
\end{lemma}

Let $G$ be a graph.
For a set $S$, the graph $G[S]=(S,\{uv\in E(G)\; u,v\in S\})$ denotes the subgraph of $G$ {\it induced by} $S$. Note that $G[S]$ can be obtained from $G$ by deleting every vertex of $V(G)\setminus S$. If $G$ has a vertex weighting $w$, then $w(S)=\sum_{u\in S}w(u)$ denotes the {\it weight} of $S$.

\begin{lemma}\label{l-sp1p4}
For every $s\geq 0$, {\sc Vertex Steiner Tree} can be solved in time
$O(n^{2s^2-s+5})$ for connected $(sP_1+P_4)$-free graphs on $n$ vertices.
\end{lemma}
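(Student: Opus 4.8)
The plan is to first reformulate the optimization. Since every leaf of a minimum Steiner tree must be a terminal (deleting a non-terminal leaf strictly decreases the weight, as all weights are positive), computing a minimum-weight Steiner tree for $U$ is equivalent to finding a minimum-weight set $W$ with $U\subseteq W$ such that $G[W]$ is connected; the optimum weight is then $w(W)$, attained by any spanning tree of $G[W]$. Writing $S=W\setminus U$ for the set of \emph{Steiner vertices}, the task becomes: minimize $w(S)$ subject to $G[U\cup S]$ being connected. I would treat $s=0$, i.e.\ $P_4$-free graphs, as the base case. Here Lemma~\ref{l-p4} supplies a spanning complete bipartite subgraph with parts $A$ and $B$: if $U$ meets both $A$ and $B$ then $G[U]$ already contains a spanning complete bipartite graph on $(U\cap A,U\cap B)$ and so is connected, giving $S=\emptyset$; and if $U$ lies inside one part, a single cheapest vertex of the other part is adjacent to all of $U$ and reconnects it. Thus a minimum Steiner tree in a cograph uses at most one Steiner vertex, and a direct implementation of this finishing routine runs in $O(n^5)$ time, matching the bound claimed for $s=0$.

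For $s\geq 1$ the engine I would exploit is the following consequence of $(sP_1+P_4)$-freeness: for every vertex $v$, the graph $G-N[v]$ is $((s-1)P_1+P_4)$-free, because $v$ is non-adjacent to all of $V\setminus N[v]$, so any induced $(s-1)P_1+P_4$ avoiding $N[v]$ would extend, together with $v$, to an induced $sP_1+P_4$ in $G$. Iterating this, a greedy selection of induced $P_4$'s — where each new $P_4$ is sought in the part not yet dominated by the previously chosen vertices, hence is non-adjacent to all of them — can be repeated at most $s$ times, since $s+1$ pairwise non-adjacent induced $P_4$'s would contain an induced $sP_1+P_4$ (take one vertex from each of $s$ of them together with the remaining $P_4$). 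Consequently there is a vertex set $D$, of size bounded by a function of $s$, such that $G-N[D]$ is $P_4$-free, i.e.\ a cograph. This is the structural decomposition that lets me peel off a cograph remainder on which the base-case routine applies.

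The algorithm is then a guessing procedure combined with this decomposition. I would enumerate every candidate set $X$ of at most $2s^2-s$ vertices intended to play the role of the \emph{essential} Steiner vertices — those whose removal, together with the closed neighbourhoods they control, leaves the residual terminal-connection problem inside a $P_4$-free part of the graph. There are $O(n^{2s^2-s})$ such sets. For each guess $X$ I would contract the portion of the connection already forced by $X$, observe that the remaining components of $G[U]$ must be joined using vertices of the associated $P_4$-free remainder, and complete the solution with the cograph finishing routine of the base case in $O(n^5)$ time. Taking the minimum-weight outcome over all guesses yields total running time $O(n^{2s^2-s}\cdot n^5)=O(n^{2s^2-s+5})$; correctness follows because the optimal $S$ is recovered precisely when $X$ is guessed to be its essential part, and for every other guess the procedure returns the weight of a valid (possibly non-optimal) solution.

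The main obstacle is the structural heart: proving that some minimum solution has its complexity concentrated in at most $2s^2-s$ essential Steiner vertices, after which the remaining connections live in a $P_4$-free graph and are handled optimally by the base case. The delicate point is controlling the interaction between Steiner vertices and induced $P_4$'s. One must show that no cheaper solution can use many ``spread-out'' Steiner vertices, and here the quadratic-in-$s$ count is pinned down by the greedy-$P_4$ argument above together with a Ramsey-type bound: an independent set of Steiner vertices each avoiding a common $P_4$-structure is forbidden once it is large, which caps the essential set and forces the rest of the joining to take place within a cograph, where Lemma~\ref{l-p4} finishes the job.
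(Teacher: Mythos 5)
Your reformulation (minimise $w(S)$ subject to $G[U\cup S]$ being connected) and your treatment of the $P_4$-free case via Lemma~\ref{l-p4} coincide with the paper's, and your observation that one can greedily extract at most $s$ pairwise anticomplete induced $P_4$'s is correct. However, the heart of the argument --- that some minimum solution concentrates its ``complexity'' in at most $2s^2-s$ essential Steiner vertices, after which the residual connection problem lives in a $P_4$-free part and is solved optimally by the base case --- is precisely the step you do not prove, and it does not follow from your domination claim. Knowing that $G-N[D]$ is a cograph for some set $D$ of at most $4s$ vertices tells you nothing about where the Steiner vertices of an optimal solution lie: they may all sit inside $N[D]$; the graph $G-N[D]$ need not be connected, need not contain the terminals, and so Lemma~\ref{l-p4} (a statement about \emph{connected} $P_4$-free graphs) gives no handle on the residual problem you describe. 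Your final paragraph concedes that the ``structural heart'' is missing and appeals to an unspecified Ramsey-type bound, so the proposal as written is incomplete; the fact that your exponent arithmetic reproduces $2s^2-s+5$ does not substitute for the bound itself.

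The paper closes this gap with a direct locality argument that avoids any decomposition of $G$. Let $R\subseteq V\setminus U$ be weight-minimal with $G[U\cup R]$ connected. If $G[U\cup R]$ is $P_4$-free then $|R|\le 1$, exactly as you argue. Otherwise fix an induced $P_4$, say $P$, in $G[U\cup R]$, and call a component of $G[U]$ \emph{bad} if it meets neither $P$ nor $N(P)$; there are at most $s-1$ bad components, since $s$ of them together with $P$ would yield an induced $sP_1+P_4$. Every non-bad component attaches directly to $P$, and each bad component attaches to $P$ via a shortest path in $G[U\cup R]$ whose internal vertices number at most $2s+1$ (a longer induced path contains $sP_1+P_4$). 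The union of $V(P)$, $U$ and these internal vertices already induces a connected subgraph containing $U$, so by weight-minimality $R$ is contained in it, giving $|R|\le 4+(2s+1)(s-1)=2s^2-s+3$. Brute-force enumeration of all $O(n^{2s^2-s+3})$ candidate sets with an $O(n^2)$ connectivity check then yields the stated running time. To salvage your route you would have to prove an analogue of this bound on the \emph{entire} optimal Steiner set; the dominating-family-of-$P_4$'s structure alone does not deliver it.
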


\begin{proof}
Let $s \geq 0$ be an integer.
Let $G=(V,E)$ be a connected $(sP_1+P_4)$-free graph with a vertex weighting $w: V\rightarrow \mathbb{Q}^+$ and set of terminals $U$.
We show how to solve the optimization version of {\sc Vertex Steiner Tree} on $G$.
Let $R\subseteq V\setminus U$ be such that 
$G[U\cup R]$ is connected and, subject to this condition, $U\cup R$ has minimum weight $w(U\cup R)$. 
Thus any spanning tree of $G[U\cup R]$ is an optimal solution.  Let us consider the possible size of $R$.

First suppose that $G[U\cup R]$ is $P_4$-free.  Then, by Lemma~\ref{l-p4}, $G[U\cup R]$ has a spanning complete bipartite subgraph.  That is, there is a bipartition $(A,B)$ of $U\cup R$ such that every vertex in $A$ is joined to every vertex in $B$.
We may assume without loss of generality that $|U|\geq 2$. Then $|U\cup R|\geq 2$, and thus neither $A$ nor $B$ is the empty set. If $U$ intersects both $A$ and $B$, then $G[U]$ is connected and $|R|=0$.  So let us assume that $U \subseteq A$, and so $R \supseteq B$.  Then $R \cap A = \emptyset$ since $G[U \cup B]$ is connected.  As we know that every vertex in $A=U$ is joined to every vertex in $B=R$, we find that $|R|=1$.

Suppose instead that $G[U\cup R]$ contains an induced path $P$ on four vertices.  We call the connected components of $G[U]$ \emph{bad} if they do not intersect $P$ or the neighbours of $P$ in $G$.  There are at most $s-1$ bad components; else, $G$ contains an $sP_1+P_4$.  Let $U^\ast$ be a subset of $U$ that includes one vertex from each of these bad components.  Then each vertex of $G[U \cup R]$ belongs either to $U$ or $P$ or is an internal vertex of a shortest path in $G[U \cup R]$ from $P$ to a vertex of $U^\ast$.  The number of internal vertices in such a shortest path is at most $2s+1$; else, the path contains an induced $sP_1+P_4$.  As $R$ is a subset of the union of $V(P)$ and the sets of these internal vertices, we find that $|R| \leq 4 + (2s+1)(s-1) = 2s^2-s+3$.

So in all cases $R$ contains at most $2s^2-s+3$ vertices and our algorithm is just to consider every such set $R$ and check, in each case, whether $G[U\cup R]$ is connected.  Our solution is the smallest set found that satisfies the connectivity constraint.
As there are $O(n^{2s^2-s+3})$ sets to consider, and  checking  connectivity takes $O(n^2)$ time, the algorithm requires  $O(n^{2s^2-s+5})$ time.
\qed
\end{proof}

\subsection{The Proof}\label{s-final}

We are now ready to prove our second dichotomy.

\medskip
\noindent
{\bf Theorem~\ref{t-main} (restated)}
{\it Let $H$ be a graph. If $H$ is an induced subgraph of $sP_1+P_4$ for some $s\geq 0$, then {\sc Vertex Steiner Tree} is polynomial-time solvable for $H$-free graphs, otherwise even unweighted {\sc Vertex Steiner Tree} is \NP-complete.}

\begin{proof}
If $H$ has a cycle, then we apply Theorem~\ref{t-cb} or Theorem~\ref{t-split}.
Hence, we may assume that $H$ has no cycle, so $H$ is a forest. If $H$ contains a vertex of degree at least~$3$, then the class of $H$-free graphs contains the class of claw-free graphs, which in turn contains the class of line graphs. Hence, we can apply Lemma~\ref{l-line}. Thus we may assume that $H$ is a linear forest. If $H$ contains a connected component with at least five vertices or two connected components with at least two vertices each, then the class of $H$-free graphs contains the class of $2P_2$-free graphs. Hence, we can apply Theorem~\ref{t-split}.
It remains to consider the case where $H$ is an induced subgraph of $sP_1+P_4$ for some $s\geq 0$, for which we can apply Lemma~\ref{l-sp1p4}. \qed
\end{proof}

\section{Conclusions}\label{s-con}

We presented complexity dichotomies both for {\sc Edge Steiner Tree} restricted to $(H_1,H_2)$-free graphs and for {\sc Vertex Steiner Tree}
for $H$-free graphs. The latter dichotomy also holds for the unweighted variant, in which case the problems {\sc Edge Steiner Tree}
 and {\sc Vertex Steiner Tree} are equivalent due to Observation~\ref{o-equivalent}.
 
 \begin{figure}
\begin{minipage}[c]{0.5\textwidth}
\begin{tikzpicture}[scale=1]
\draw (0,1)--(-1,1)--(-2,0)--(-1,-1)--(2,-1) (-2,0)--(1,0); \draw[fill=black] (-1,1) circle [radius=2pt] (0,1) circle [radius=2pt] (-2,0) circle [radius=2pt] (-1,0) circle [radius=2pt] (0,0) circle [radius=2pt] (1,0) circle [radius=2pt] (-1,-1) circle [radius=2pt] (0,-1) circle [radius=2pt] (1,-1) circle [radius=2pt] (2,-1) circle [radius=2pt];
\end{tikzpicture}
\end{minipage}
\qquad
\begin{minipage}[c]{0.5\textwidth}
\begin{tikzpicture}[scale=1] 
\draw (0,1)--(-1,1)--(-1,-1)--(2,-1) (-1,0)--(1,0)(-1,1)to[out=240,in=120](-1,-1); \draw[fill=black] (-1,1) circle [radius=2pt] (0,1) circle [radius=2pt] (-1,0) circle [radius=2pt] (0,0) circle [radius=2pt] (1,0) circle [radius=2pt] (-1,-1) circle [radius=2pt] (0,-1) circle [radius=2pt] (1,-1) circle [radius=2pt] (2,-1) circle [radius=2pt];
\end{tikzpicture}
\end{minipage}
\caption{The graph $S_{2,3,4}$ (left) and the line graph of $S_{2,3,4}$ (right).}\label{f-st}
\end{figure}
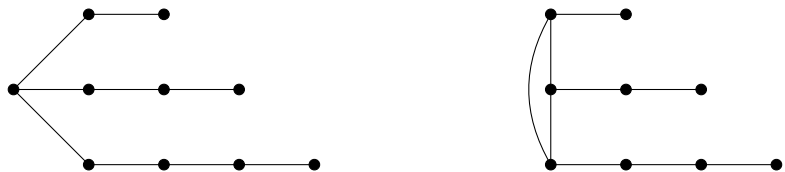
 
\noindent
In particular, we observed that {\sc Edge Steiner Tree} can be solved in polynomial time for $(H_1,H_2)$-free graphs if and only if the class of $(H_1,H_2)$-free graphs has bounded treewidth (assuming \sP$\neq$ \NP). It is a natural to ask whether {\sc Edge Steiner Tree} can be solved in polynomial time on a hereditary graph class ${\cal G}$ characterized by a {\it finite} set ${\cal F}_{\cal G}$ of forbidden induced subgraphs if and only if ${\cal G}$ has bounded treewidth (assuming $\sP\neq \NP$). If ${\cal F}_{\cal G}$ does not contain any complete graph or induced subgraph of a complete bipartite graph, then ${\cal G}$ contains the class of complete graphs or complete bipartite graphs, respectively. Then ${\cal G}$ has unbounded treewidth, and moreover {\sc Edge Steiner Tree} is \NP-complete by Lemma~\ref{l-complete} or~\ref{l-cbg}, respectively. The graph~$S_{h,i,j}$, for $1\leq h\leq i\leq j$, is the \emph{subdivided claw}, which is the tree with one vertex~$x$ of degree~$3$ and exactly three leaves, which are at distance~$h$,~$i$ and~$j$ from~$x$, respectively; see \figurename~\ref{f-st} for an example. Let ${\cal S}$ be the class of graphs, every connected component of which is either a subdivided claw or path. Suppose that ${\cal F}_{\cal G}$ does not contain any graph from ${\cal S}$ as an induced subgraph. As ${\cal F}_{\cal G}$ is finite, we can take a sufficiently large value of $r$ such that ${\cal G}$ contains the class of $r$-subdivisions of walls. Then ${\cal G}$ has unbounded treewidth by Lemma~\ref{l-sw} and {\sc Edge Steiner Tree} is \NP-complete for ${\cal G}$ by Lemma~\ref{l-wall}. Let ${\cal T}$ be the class of line graphs of graphs of ${\cal S}$. By repeating the previous arguments with Lemmas~\ref{l-sw2} and~\ref{l-netwall} instead of Lemmas~\ref{l-sw} and~\ref{l-wall}, respectively, we find that ${\cal G}$ has unbounded treewidth and that {\sc Edge Steiner Tree} is \NP-complete for ${\cal G}$ if ${\cal F}_{\cal G}$ does not contain a graph from ${\cal T}$.

To summarize the above, a class of graphs ${\cal G}$ with finite ${\cal F}_{\cal G}$ has unbounded treewidth and {\sc Edge Steiner Tree} is \NP-complete for ${\cal G}$ if ${\cal F}_{\cal G}$ does not contain any complete graph, or any induced subgraph of a compete bipartite graph, or any graph from ${\cal S}$, or any graph from ${\cal T}$. In a very recent arXiv paper~\cite{LR}, Lozin and Razgon showed that a class ${\cal G}$ with finite ${\cal F}_{\cal G}$ has bounded treewidth if ${\cal F}_{\cal G}$ contains a complete graph, an induced subgraph of a complete bipartite graph, a graph from ${\cal S}$ and a graph from ${\cal T}$. Recall that {\sc Edge Steiner Tree} is polynomial-time solvable for graphs of bounded treewidth (Lemma~\ref{l-treewidth}). Hence, the result of Lozin and Ragzon implies that {\sc Edge Steiner Tree} is polynomial-time solvable on a hereditary graph class ${\cal G}$ with finite ${\cal F}_{\cal G}$  {\it if and only if} ${\cal G}$ has bounded treewidth (assuming P $\neq$ \NP).

The following result shows that the situation changes if ${\cal F}_{\cal G}$ is infinite.

\begin{theorem}\label{t-un}
There exists a hereditary graph class ${\cal G}$ of unbounded treewidth for which {\sc Edge Steiner Tree} can be solved in polynomial time.
\end{theorem}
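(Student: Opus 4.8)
The plan is to exhibit an explicit hereditary class whose treewidth is unbounded over the class but grows only logarithmically in the number of vertices of each individual member, and then to solve {\sc Edge Steiner Tree} by the single-exponential treewidth algorithm of~\cite{BCKN15} (the strengthening of Lemma~\ref{l-treewidth} that runs in time $2^{O(t)}n^{O(1)}$ on an $n$-vertex graph of treewidth $t$, given a tree decomposition). The point is that on a graph of treewidth $O(\log n)$ this running time is $2^{O(\log n)}n^{O(1)}=n^{O(1)}$, hence polynomial, even though the treewidth of the class as a whole is unbounded.

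First I would fix the family. For each $h\geq 1$ let $G_h$ be the $2^h$-subdivision of the wall of height $h$, and let ${\cal G}$ be the class of all induced subgraphs of the graphs $G_h$ $(h\geq 1)$; this class is hereditary by construction. I would then record that ${\cal G}$ has unbounded treewidth: since $G_h$ is a subdivision of the wall of height $h$ it contains that wall as a subdivision, so by Theorem~\ref{l-grid} its treewidth tends to infinity with $h$; concretely, the wall of height $h$ has treewidth $\Theta(h)$, and subdividing edges preserves the treewidth of any graph of treewidth at least~$2$, so $\mathrm{tw}(G_h)=\Theta(h)$.

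The key step is a uniform logarithmic bound on the treewidth of every member of ${\cal G}$. Let $H$ be an induced subgraph of some $G_h$. If $H$ is a forest then $\mathrm{tw}(H)\leq 1$. Otherwise $H$ contains a cycle; but every cycle of $G_h$ is a subdivided cycle of the wall, and so passes through at least one original edge, which has been replaced by a path with $2^h$ internal vertices. Hence every cycle of $G_h$, and therefore of $H$, has more than $2^h$ vertices, giving $|V(H)|>2^h$, that is $h<\log_2|V(H)|$. Since $\mathrm{tw}(H)\leq\mathrm{tw}(G_h)=\Theta(h)$, there is a universal constant $c$ with $\mathrm{tw}(H)\leq c\log_2|V(H)|$ for every $H\in{\cal G}$. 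Because this treewidth is $O(\log n)$, a tree decomposition of width $O(\log n)$ can be found in polynomial time (a standard treewidth approximation runs in time single-exponential in the width), and then the algorithm of~\cite{BCKN15} solves {\sc Edge Steiner Tree} on $H$ in time $2^{O(\mathrm{tw}(H))}|V(H)|^{O(1)}=|V(H)|^{O(1)}$; thus {\sc Edge Steiner Tree} is polynomial-time solvable on ${\cal G}$.

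I expect the only genuine subtlety to be the treewidth accounting in the key step: the girth argument that forbidding short cycles forces exponentially many vertices is exactly what converts ``unbounded treewidth'' into ``per-graph logarithmic treewidth,'' and it must be checked to hold for \emph{all} induced subgraphs, not just the $G_h$ themselves (which it does, since the two ingredients---a cycle needs more than $2^h$ vertices, and $\mathrm{tw}(H)\leq\mathrm{tw}(G_h)$---are insensitive to the exact shape of $H$). Finally I would remark why this is consistent with Lemma~\ref{l-wall}: that hardness result concerns the $r$-subdivision of walls for a \emph{fixed} $r$, whereas here the subdivision length $2^h$ grows with the wall, so the natural reduction would inflate the instance to exponential size and is not polynomial. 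In effect the exponential subdivision dilutes the hard grid structure enough that solving via treewidth becomes polynomial, while the treewidth of the class remains unbounded.
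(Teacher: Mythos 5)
Your proof is correct, and it establishes the theorem by a genuinely different route from the paper's. The paper takes ${\cal G}$ to be the graphs of maximum degree at most~$3$ in which every path between two degree-$3$ vertices has at least $2^r$ vertices, $r$ being the number of degree-$3$ vertices; unbounded treewidth again comes from highly subdivided walls, but the algorithm is elementary: reduction rules (contracting through degree-$2$ non-terminals, pruning degree-$1$ vertices) shrink the instance to an equivalent one on $O(r)$ vertices, which is then solved by brute force over edge subsets in $2^{O(r)}$ time --- polynomial since $|V(G)|\geq 2^r$. Your class (induced subgraphs of the $2^h$-subdivisions of height-$h$ walls) exploits the same exponential dilution, but converts it via the girth argument into the structural statement that every member has treewidth $O(\log n)$, and then applies the single-exponential treewidth machinery: a $2^{O(t)}\,n^{O(1)}$ constant-factor treewidth approximation to obtain the decomposition, followed by the $2^{O(t)}\,n^{O(1)}$ algorithm of~\cite{BCKN15}. (This last step is essential: the plain statement of Lemma~\ref{l-treewidth} for classes of bounded treewidth, or an $n^{O(t)}$ dynamic program, would only give quasi-polynomial time, and you correctly invoke the single-exponential versions.) The paper's route buys self-containedness --- no constructive tree decompositions and no rank-based dynamic programming are needed --- whereas yours isolates a cleaner and more reusable fact (per-graph logarithmic treewidth) that immediately yields polynomial-time algorithms for \emph{every} problem solvable in time $2^{O(\mathrm{tw})}n^{O(1)}$, not just {\sc Edge Steiner Tree}. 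Two details worth making explicit in a final write-up: fix the width of the height-$h$ wall (say, equal to its height) so that $\mathrm{tw}(G_h)=O(h)$ is available as an upper bound and not merely a lower bound, and keep your observation that both ingredients of the logarithmic bound survive passage to induced subgraphs, which is exactly the right check.
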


\begin{proof}
Let ${\cal G}$ consist of graphs $G$ of maximum degree at most~$3$ such that every path between any two degree-$3$ vertices in $G$ has at least $2^{r}$ vertices, where $r$ is the number of degree-$3$ vertices in $G$. As deleting a vertex neither 
increases the maximum degree of a graph
nor shortens any path between a pair of degree-$3$ vertices,
${\cal G}$ is hereditary (note that ${\cal G}$ is also closed under taking edge deletion).
As ${\cal G}$ contains subdivisions of walls of arbitrarily large height, the treewidth of ${\cal G}$ is unbounded due to Theorem~\ref{l-grid}.

We solve {\sc Edge Steiner Tree} on an instance $(G,w,U,k)$ with $G\in {\cal G}$~as follows. 
If $G$ has at most one vertex of degree~$3$, then $G$ has treewidth at most~$2$, so we can apply Lemma~\ref{l-treewidth}.
Suppose that $G$ has at least two vertices of degree~$3$. Then we apply the following rules, while possible.

\medskip
\noindent
{\bf Rule 1.} There is a non-terminal $x$ of degree~$2$. Let $xy$ and $xz$ be its two incident edges.
We contract $xy$ and give the new edge weight $w(xy)+w(xz)$. If there was already an edge between $y$ and $z$, then we remove one with largest weight.

\medskip
\noindent
{\bf Rule 2.} There is a terminal  $x$ of degree~$2$ and its neighbours $y$ and $z$ are also terminals. 
Assume $w(xy)\leq w(xz)$. We observe that there is an optimal solution that includes the edge $xy$. Hence, we may contract $xy$ and decrease $k$ by $w(xy)$.

\medskip
\noindent
{\bf Rule 3.} There is a vertex $x$ of degree~$1$. Let $y$ be its neighbour. If $x$ is not a terminal, then remove $x$. Otherwise, contract $xy$ and decrease $k$ by $w(xy)$.

\medskip
\noindent
Let $(G',w',U',k')$ be the resulting instance, which is readily seen to be equivalent to $(G,w,U,k)$.
Then $G'$ has $r$ vertices of degree~$3$ and each vertex of degree at most~$2$ has a neighbour of degree~$3$; otherwise, one of Rules 1--3 applies. So, $G'$ has at most
$4r$ vertices and thus $O(r)$ edges. 
It remains to solve {\sc Edge Steiner Tree} on $(G',w',U',k)$. We do this in $r\cdot 2^{O(r)}$ time by guessing for each edge in~$G'$ if it is in the solution and then verifying the resulting candidate solution. Recall that in the problem definition we assume that $G$ is connected. As $r\geq 2$ and $G$ contains at least two vertices of degree~$3$, this means that $|V(G)|\geq 2^r$. So, the running time is polynomial in $|V(G)|$. 
\qed
\end{proof}

To obtain a dichotomy for {\sc Vertex Steiner Tree} and unweighted {\sc Vertex Steiner Tree} for $(H_1,H_2)$-free graphs, we need to answer several open problems, including the following two. 

\begin{open}\label{o-o3}
Does there exist a pair $(H_1,H_2)$ such that {\sc Vertex Steiner Tree} and unweighted {\sc Vertex Steiner Tree} have different complexities for $(H_1,H_2)$-free graphs?
\end{open}

\noindent
For Open Problem~\ref{o-o3}, it may be prudent to focus on pairs $(H_1,H_2)$ for which the mim-width of $(H_1,H_2)$-free graphs is unbounded. This is due to the aforementioned result of Bergougnoux and Kant{\'{e}}~\cite{BK19}, who proved that {\sc Vertex Steiner Tree} is polynomial-time solvable for graph classes of bounded mim-width for which we can compute a branch decomposition of constant mim-width in polynomial time.

\begin{open}\label{o-o4}
For every integer~$t$, determine the complexity of {\sc Vertex Steiner Tree} for $(K_{1,3},P_t)$-free graphs.
\end{open}

\noindent
For Open Problem~\ref{o-o4} we note that {\sc Vertex Steiner Tree} is polynomial-time solvable for $P_4$-free graphs by Theorem~\ref{t-main}. It is known that 
$(K_{1,3},P_5)$-free graphs have unbounded mim-width~\cite{BHMPP}. Hence, the first open case is where $t=5$.
To obtain an answer to Open Problem~\ref{o-o4}, we need new insights into the structure of $(K_{1,3},P_t)$-free graphs. These insights may also be useful in obtaining new results for other problems, 
such as the {\sc Graph Colouring} problem restricted to $(K_{1,3},P_t)$-free graphs (see~\cite{GJPS17,MPS19}).

\medskip
\noindent
{\it Acknowledgments.} We thanks an anonymous reviewer for a number of helpful suggestions.

\end{document}